\newtheorem{Definition}{Definition}
\newtheorem{Addendum}{Addendum}[Definition]
\theoremstyle{remark}
\newcommand{\R}{\mathbb{R}}
\newcommand{\I}{\mathcal{I}}
\newcommand{\Reeb}{\mathrm{R}}
\newcommand{\Ord}{\mathrm{Ord}}
\newcommand{\Ext}{\mathrm{Ext}}
\newcommand{\Rel}{\mathrm{Rel}}
\newcommand{\Dg}{\mathrm{Dg}\,}
\newcommand{\Mapper}{\mathrm{M}}
\newcommand{\im}{\mathrm{im}}
\newcommand{\Crit}{\mathrm{Crit}}
\newcommand{\Rips}{\mathrm{Rips}}
\newcommand{\DM}{\mathrm{DM}}
\newcommand{\End}{\mathrm{End}}
\newtheorem{thm}{Theorem}[section]
\newtheorem{defin}[thm]{Definition}
\newcommand{\e}{\varepsilon}
\newcommand{\dist}{\mathrm{d}}
\newcommand{\distgh}{\dist_{\rm GH}}
\newcommand\Label[1]{&\refstepcounter{equation}(\theequation)\ltx@label{#1}&}
\newcommand{\beginsupplement}{%
        \setcounter{table}{0}
        \renewcommand{\thetable}{S\arabic{table}}%
        \setcounter{figure}{0}
        \renewcommand{\thefigure}{S\arabic{figure}}%
     }
\theoremstyle{plain}
\begin{document}
  
\texttt{\textbf{\LARGE{Two-Tier Mapper: a user-independent clustering method for global gene expression analysis based on topology}}\\
$$Rachel  \ Jeitziner^a, Mathieu \ Carri\textrm{\`e}re^d,  Jacques \  Rougemont^b, $$
$$Steve \ Oudot^d, Kathryn  \ Hess^c, and  \ Cathrin  \  Brisken^a$$
$a$ Swiss Institute for Experimental Cancer Research,\\
$b$ Bioinformatics and Biostatistics Core facility,\\
$c$ Brain and Mind Institute,\\
School of Life Sciences, Ecole Polytechnique F\'ed\'erale de Lausanne, CH-1015 Lausanne, Switzerland\\
$d$ INRIA Saclay France\\}
There is a growing need for unbiased clustering methods, ideally automated. We have developed a topology-based analysis tool called Two-Tier Mapper (TTMap) to detect subgroups in global gene expression datasets and identify their distinguishing features. First, TTMap discerns and adjusts for highly variable features in the control group and identifies outliers. Second, the deviation of each test sample from the control group in a high-dimensional space is computed and the test samples are clustered in a global and local network using a new topological algorithm based on Mapper. Validation of TTMap on both synthetic and biological datasets shows that it outperforms current clustering methods in sensitivity and stability; clustering is not affected by removal of samples from the control group, choice of normalization nor subselection of data. There is no user induced bias because all parameters are data-driven. Datasets can readily be combined into one analysis. TTMap reveals hitherto undetected gene expression changes in mouse mammary glands related to hormonal changes during the estrous cycle. This illustrates the ability to extract information from highly variable biological samples and its potential for personalized medicine.

\section{Introduction}
Large datasets are generated at an exponentially increasing pace in biology and medicine, while the development of tools to analyze these data is lagging behind. The high variability of biological, in particular human, samples poses a challenge. It takes large sample numbers to understand the distribution of the data and to extract statistically significant features \cite{RNAseqNormalisation}. Often the choice of normalization is ambiguous and this affects the outcome of the analysis \cite{RNAseqNormalisation}.\\

Topology is a field of mathematics devoted to the study of shapes. Topological data analysis (TDA) is used to reduce dimensions and to recognize patterns (\cite{TD}, \cite{Chazal}). Global gene expression data samples for instance are considered as point clouds in a high-dimensional space. Topological methods can transform them into networks; the nodes are clusters of samples and the edges are determined by common samples between nodes \cite{Extracting}. Analysis of such networks enables discovery of specific patterns in any dataset. As topology is not sensitive to scale it is useful for highly variable biological data. \\

TDA approaches have been applied to numerous scientific domains including biology (\cite{Chazal}). A clustering method based on algebraic topology, Mapper, \cite{Extracting} has been applied to analyze large biological datasets, such as global gene expression profiles \cite{BC}, temporal single-cell RNA-seq data \cite{scRNAseqTOP}, and genomic data of viral evolution \cite{Viral}.

For the global gene expression analysis \cite{Monica2}, \cite{Monica3}, \cite{ReviewTop}, the data were pre-processed with a statistical tool and the combination of this statistical tool and Mapper is called Progression Analysis of Disease (PAD). Since the outcome of several statistical method, depending for the case of PAD on linear regression, can be strongly affected by outliers in both the control and the test group \cite{Outliers}, \cite{Handbookofclustering}, large sample numbers are critical for the method to render reliable results \cite{SampleSize}. Finding the outliers and removing them is troublesome in small datasets, where the definition of outliers is arbitrary \cite{SampleSize},\cite{Noiseestimation}. \\

Like other clustering methods such as $k$-means \cite{KMeans}, PAD, as well as Mapper alone, and hierarchical clustering depend on parameters the user choses; modifying parameters changes the output significantly \cite{StabilityKmeans}. Finally, clustering methods such as $k$-means do not verify stability results: small perturbations in the dataset can lead to different clusters and different conclusions \cite{Handbookofclustering}.\\

Here, we present a topology-based method inspired by PAD for global gene expression analysis particularly suited for small sample numbers ($n<25$), called Two-Tier Mapper (TTMap) which identifies significant variation and relatedness in datasets and i) can be used in a paired analysis, ii) takes into account batches, iii) is stable, and iv) does not require the user to choose any parameters. \\

\section{Results}
\subsection{Method description}
\subsubsection{Overview}

Each global gene expression profile represents a high dimensional vector in $\mathbb{R}^n$ with $n$ the number of genes. The input (Fig \ref{explanationfig} a, green) of Two-Tier Mapper (TTMap) is given by two matrices in log-2 scale, one for the control samples $\mathsf{N}$ the other for the test samples $\mathsf{T}$. Batches are defined as groups of samples distinguished by technical variation such as date and site of analysis, technical platform used or biological disparity such as different strains of mice.\\

TTMap comprises two independent parts, the Hyperrectangle Deviation Assessment (HDA) and the Global-to-Local Mapper (GtLMap). The first characterizes the control group and adjusts for outliers yielding the corrected control group that serves as a reference to calculate the deviation of each test vector individually. The second part uses the Mapper algorithm \cite{Mapper} where the parameters were carefully chosen; a two-tier cover, a special distance and an automated parameter of closeness. The two-tier cover detects global and local differences in the patterns of deviations thereby capturing the structure of the test group. The test samples are clustered according to the shape of their deviation. The extent of deviation of individual clusters translates into a color-code. A list of the differentially expressed genes is also provided (Fig \ref{explanationfig} a) (Details in \textit{Online Methods}).\\

\subsubsection{Hyperrectangle deviation assessment (HDA)}

Hyperrectangle deviation assessment (HDA) compares the value of each feature of any control sample $\mathsf N$ to the others in the same batch of the group $\mathsf N$ (Fig \ref{explanationfig} a, "adjustement of control group").  If the difference in absolute value is further than $e$, a parameter computed using the variances of all the genes (\textit{Online methods}), from the median of the others, it is considered an outlier and replaced by $\mbox{Not a Number (\textit{NA})}$. The numbers of replaced values in each sample in the control group (Fig \ref{explanationfig} a, shown by $\overline{N}_*$) are represented as a barplot (Fig \ref{explanationfig} b).  This allows the user to discern outlier samples for standard statistical analyses and to identify highly variable features of the control group (Fig \ref{explanationfig} b).


Thus, HDA creates a matrix that describes the range of expression values expected in group $\mathsf N$ corrected for outliers.  The $(k,j)$-coefficient of this matrix of the corrected control group, $(\overline{N}_k)_j$, which corresponds to the $j^{th}$ feature of sample $k$, is computed by:

\begin{equation*}
\begin{small}
((\overline{N})_k)_j = \begin{cases} NA &\mbox{if }   |(N_{k})_j -\textrm{median}_{i\in \mathcal I(N_k),i\neq k} (N_{i})_j|\geq e \\ 
(N_k)_j & \mbox{otherwise. }  \end{cases} 
\end{small},
\label{eq:01}\vspace*{-2pt}
\end{equation*}

\noindent Here, $(N_{i})_{j}$ denotes the value of the expression of gene $j$ in sample $i$, and $\mathcal{I}(N_{k}) \subseteq \{1, \dots,S\}$ is the set of indices of control samples in the batch containing $N_{k}$.
$\mbox{NA}$s are replaced by the median of the normal values in their batch.\\

Each feature has a range of values, in which control measurements are expected, for sample $T_{k}$ and gene $j$ given by $$B_j ^k= \big[\min_{i \in \mathcal{I}(T_{k})}({\overline{N}}_i)_j,\max_{i \in \mathcal{I}(T_{k})}({\overline{N}}_i)_j\big],$$
where  $\mathcal{I}(T_{k})$ is the set of indices of control samples in the batch containing $T_{k}$.
For each batch, these normal ranges determine a hyperrectangle in $n$-dimensional space $B_k = B_1^k \times \dots \times B_n^k$ (Fig \ref{explanationfig} c: example with $n=2$).\\

Each test sample $T_{{k}}$ is decomposed as $T_k = Nc.T_k + Dc.T_k$, where $Nc.T_k$ is the \emph{normal component}, which is its projection onto the hyperrectangle $B_k$ and hence is the closest point to $T_k$ inside $B_k$ (Fig \ref{explanationfig} c)  and the \emph{deviation component} $(Dc.T_{k})$, which is the remainder of the projection (Fig \ref{explanationfig} c) (\textit{Online Methods})  \\

More precisely, for each test sample $T_{{k}}$ and feature $j$, HDA computes
$$\bar{x}_j^k \in \big[\min_{i \in \mathcal{I}(T_{k})}({\overline{N}}_i)_j,\max_{i \in \mathcal{I}(T_{k})}({\overline{N}}_i)_j\big],$$
such that 
$$ | (T_{k})_j - \bar{x}_j^k|\leq | (T_{k})_j - x| $$ for all 
$$x \in \big[\min_{i \in \mathcal{I}(T_{k})}({\overline{N}}_i)_j,\max_{i \in \mathcal{I}(T_{k})}({\overline{N}}_i)_j\big].$$
 
Then,
\begin{equation*}
(Nc.T_{k})_j=  \bar{x}_j^k\quad \textrm{for all  }1\leq j\leq n \label{eq:01}\vspace*{-7pt} 
\end{equation*}
and
\begin{equation*}
(Dc.T_{k})_j=  (T_{k})_j - (Nc.T_{k})_j\quad\textrm{for all }1\leq j\leq n. \label{eq:01}\vspace*{-7pt}
\end{equation*}
\medskip
\subsubsection{Global-to-Local Mapper (GLMap)}
\label{GLMap}

The second step of TTMap first calculates distances and provides a visualization of these distances and relations in the dataset, in a manner analogous to Mapper \cite{Extracting}. It forms bins according to a measure of similarity on the test vectors.\\

The default similarity measure in GLMap is the \emph{mismatch distance}, $d_M$ given by a sum of mismatches, where a mismatch is defined by a gene that is differentially expressed in opposite direction as measured by the deviation component (\textit{Online Methods}, Fig \ref{explanationfig} d, n=1). The deviation must be bigger than $\alpha$ to avoid counting noise as mismatch. 
The mismatch distance, or sum of mismatches is defined as follows (Fig \ref{explanationfig} d), for a fixed $\alpha \geq 0$
\begin{equation*}
\begin{small}
d_M(X,Y) = \sum_{i=1}^n d_m((Dc.X)_i,(Dc.Y)_i),\\
\mbox{ where}
\end{small}
\end{equation*}
\begin{equation*}
\begin{small}
d_{m} (x,y) = \begin{cases} 0 &\mbox{if }  \textrm{sign}(x)=\textrm{sign}(y),\\ 
1 & \mbox{if }  \textrm{sign}(x) \neq\textrm{sign}(y) \\  
&\textrm{ and } |x| \textrm{ or }|y| \geq \alpha  \\
\frac{|x-y|}{8\alpha n} & \mbox{otherwise } \end{cases} 
\end{small}.
\end{equation*}

If features measured are gene expression values, then the default value does not need to be changed and is set to $\alpha= 1$, corresponding to a 2-fold-change, which is a standard cut-off for gene expression. \\

Furthermore, GLMap uses a \emph{filter function}, given by properties of interest of the samples. It can be chosen by the user to take into account relevant variables, such as the age of the patients in a cohort. The default filter function in GLMap, called \emph{total absolute deviation} and denoted $\tau$, measures the overall deviation of a test vector from the control, i.e., 
$$\tau : \mathsf T \rightarrow \mathbb{R} : T_k \mapsto \sum_{l \in S} \mid (Dc.T_k)_l \mid,$$
where $S$ is a subset of features, determined by the user, the default being to select all features, and $\mathsf T$ is the set of test vectors, which is a subset of $\mathbb R^{n}$.

Let $\operatorname{Im} \tau $ denote the \emph{image} of $\tau$ \textit{with multiplicity}, i.e., 
$$\operatorname{Im} \tau =\{ (\tau (X) , \sigma) \mid X\in \mathsf T, \ \sigma \in \{1, \dots, \textrm{mult}(X)\}\} \subseteq \mathbb R \times \mathbb N,$$

with the lexicographic order,
where $\textrm{mult}(X)= \textrm{card}(\tau^{-1}(\tau(X)))$ is the multiplicity of $\tau(X)$
and for any $0\leq a<b\leq 100$, let 
$$\tiny{q_{[a,b[} = \pi_1\big(\big\{y \in \operatorname{Im}\tau  \mid \textrm{quantile}_a(\operatorname{Im} \tau) \leq y < \textrm{quantile}_b(\operatorname{Im} \tau ) \big\}\big),}$$

where $\pi_1$ is the natural projection on the first component, and  $\textrm{quantile}_a(\operatorname{Im} \tau)$ is the a-th quantile of the ordered values in $\operatorname{Im} \tau$. \\

In default mode, GLMap applies the Mapper algorithm \cite{Extracting} to the quadruple given by the mismatch distance $d_{\text{M}}$, a closeness parameter $\epsilon$ (computed from the data ,\textit{Online Methods}, which depends on the variance in the control group), the total absolute deviation $\tau$, and the covering  of $\operatorname{Im} \tau$ given by 
$${\mathfrak{I} = \{ \operatorname{Im} \tau,  q_{[0,25[}, q_{[25,50[}, q_{[50,75[},  q_{[75,100]}\}.}$$
This means that GLMap performs single-linkage clustering with parameter $d_M$, i.e. two samples $X$ and $Y$ are clustered together if and only if there is a list of samples $X=X_{0}, X_{1},..., X_{n}=Y$ such that $d_{\text{M}}(X_{i}, X_{i+1})<\epsilon$ for all $0\leq i \leq n-1$ to 
\begin{itemize}
\item all of $\mathsf T$, giving the connected components $\{C_{01}, \dots, C_{0l(0)}\}$  of the graph $G_\epsilon$ defined by the vertex set $\{T_k\}$ and the edge set $\{(T_a,T_b) \textrm{ s.t. } d_M(T_a,T_b) < \epsilon \}$ and then to
\item the pre-image with respect to $\tau$ of each of the quantiles $q_{0,25}, q_{25,50}, q_{50,75},$ and   $q_{75,100}$, which gives the connected components $\{C_{i1}, \dots, C_{il(i)}\}$ of the subgraph $G_\epsilon(i) = \tau^{-1}(I_i)$, where $I_i \in \mathfrak{I}.$
\end{itemize}  
Two connected components $C_{ij}$ and $C_{kl}$ are represented as spheres with diameters increasing with the number of samples in each component. The spheres are connected by an edge whenever $C_{ij} \cap C_{kl} \neq \emptyset$, i.e. the algorithm links clusters that share samples as every sample is assessed twice for connectivity, once globally and once within its quartile, links are formed between local and global structures, enabling the discovery of subgroups based on the filter function of the global clusters (Fig \ref{explanationfig} a, Part2).\\

 The color of a sphere in the output figure of the method (see example in section \ref{flyatlas}, Figure \ref{figure3} a) is determined by the average of the values of the filter function applied to the samples in the bin. A legend for the color code is provided at the bottom of the output figure, for the size of the balls on the right, and for the different tiers on the left, i.e. the overall clustering and the clustering in the different quartiles, (Fig \ref{explanationfig} a, Part2). A list of the differentially expressed genes per cluster is provided.

\subsection{Theoretical aspects}
To assess the theoretical stability of TTMap, the effect of modifications of the source space, of the filter function and of approximations with a point cloud on its outputs was studied (\textit{Online methods}). Since there is no natural distance on the outputs of TTMap, one can not assess the stability directly on the TTMap graphs. Therefore, the information contained in the TTMap graphs is summarized as a diagram in $\mathbb{R}^2$ (Supplementary Fig \ref{fig:sign} d), similar to a persistence diagram (PD) \cite{Pers}, where there is a natural distance $d$ that generalizes the distance on PD, allowing a comparison of TTMap graphs.
 
The PD are summaries of the topological features of the graph (connected component, hole, branch, etc...) depicted as dots. Here, we supplemented PD with links between the ``local" features and the connected components (or the global clusters), forming a descriptor, denoted $DM(X,f, \mathfrak{I})$, for a space $X$ and a filter function $f: X \rightarrow \mathbb{R}$ that verifies mild regularity conditions.
In terms of these enriched PD, we establish the following theorems, stated informally here and precisely in the \textit{Online Methods} in Theorem \ref{completeness},\ref{thm:perturb_bis},\ref{thm:DStab_bis},  \ref{th:sig-approx} respectively.

 \begin{itemize}[leftmargin=*,align=left]
 \item  \textit{Completeness}
 The descriptor is complete, i.e., from the diagram $DM(X,f,\mathfrak{I})$ the information contained in the graph of $TTMap(X,f, \mathfrak{I})$ can be recovered. 
\item \textit{Stability with respect to changes of the filter function}
 If the filter function $f$ on the space is perturbed, the distance between the diagrams of $f$ and of its perturbation is not greater than the amount of perturbation.

\item \textit{Stability with respect to perturbations of the domain}
If the starting space $X$ is perturbed, then the distance between the diagrams of $X$ and of its perturbation depends linearly on the amount of perturbation. 
\item \textit{Stability with respect to point cloud approximations}

If data points are sampled on a space $X$, then the difference between the diagrams associated to $X$ and to the $\delta$-neighborhood graph built on the point cloud is less than a value depending on $\delta$.
 \end{itemize}

 \subsection{In silico validation}
 TTMap was tested on simulated data that mimics a situation for which standard methods are weak, i.e., small sample size (n$<$20). Moreover, differences in the subgroups arise from the same genes deviating in opposite directions. Control samples  $C_1, \dots C_6$ and test samples are generated composed of two subgroups $TA$ and $TB$, given by $TA_1 , TA_2, TA_3, TB_1, TB_2, TB_3,$ each with 10,000 features. The subgroups $TA$ and $TB$ have the same mean per gene as the mean of the control group, except for $m$ genes for which the mean is $\Delta$ times higher for TA, respectively lower for TB. The $m$ genes are true positives, whereas all the other features are true negatives. The accuracy of the method is estimated by simulating at least 30 datasets per condition and calculating the percentage of times it finds the right subgroups, establishing the clustering power of this method. Since TTMap is an analytical workflow we also assessed its performance in finding the genes that are differentially expressed. \\

 \subsubsection{TTMap's performance as a clustering method}
The performance of TTMap was assessed, with the parameter $\epsilon$ given by the lowest 2.5 percentile (Fig \ref{figure2} a) or the highest 2.5 percentile of the distribution of the distance $d_M$ between two random variables (Fig \ref{figure2} b) {with the} variance $\sigma^2$ ranging from $0.01$  to $1$ in order to measure the accuracy of TTMap in situation ranging from low variance to high variance. The number of significant features $m$ in the test cases were 50, 100, 500, 1000, and 5000, i.e., 0.5, 1, 5, 10, and $50\%$ of all the features, respectively. When $\Delta=2$, TTMap performed 100 \% correctly when the variance in the control group was in the biologically relevant range\cite{Noiseestimation}  (Fig \ref{figure2} a, b, pink shade), where $\sigma^2< 0.3$ (Fig \ref{figure2} a).  {For variances between $ 0.4$ and $0.8$ and for 0.5\% and 1\% of significant features respectively,} the method could no longer distinguish between noise and signal ($\Delta=2$) and classified all the samples as different. When $\epsilon$ is chosen in the higher 2.5 percentile (Fig \ref{figure2} b), the method was less good than the lower 2.5 percentile when the variances are low (below 0.5), but much better for higher variances (greater than 0.5). Moreover, the higher the number of significant features, the better TTMap performs in finding the two subgroups. Performance also improved when $\Delta$ increased (Fig \ref{figure2} a, Supplementary Fig \ref{SUP} a).\\
In contrast, a standard clustering tool \textit{Mclust }\cite{Mclust} that like TTMap does not need any parameter selection, was unable to find the right groups (Fig \ref{figure2} a, black line). This is in line with the fact that Mclust learns from the data, and hence requires a bigger sample size to be able to perform properly. Moreover, on this dataset the running time of Mclust is 45 times longer than that of TTMap (3.8 minutes versus 5 seconds, respectively). 
To assess whether the accuracy of TTMap relies solely on HDA or on GLMap, we applied Mclust to the data obtained after HDA, i.e. the deviation components. The accuracy of Mclust in detecting the subgroups  improved from 0 \% to 20\% on average  (Fig \ref{figure2} c). Thus, the accuracy of Mclust improved but did not reach the level of accuracy of TTMap.\\
  \subsubsection{TTMap's performance as a differential expression method in finding true positives and true negatives}
       
 To assess the performance of TTMap with regards to the genes determining a cluster, the numbers of true positives and of true negatives were computed. In datasets with low variance ($\sigma^2<0.5$) in the control group, TTMap found close to 100\% of the true positives and true negatives (Fig \ref{figure2} d and e). Since the samples in TA and TB have the same differentially expressed features but expressed in opposite directions, the moderated $t$-test did not detect any true positives. Even when the right groups are provided it poorly discovered the true positives in the subgroups, due to the low sample size (Fig \ref{figure2} f). Together with the observation that the moderated t-test finds close to 100 \% of true negatives, this suggests that the standard method is more likely to detect no significant genes in such a situation, and is therefore dominated by TTMap.\\
 
   \subsubsection{TTMap's performance on different sample sizes}
   \label{samplesize}
   TTMap was assessed on bigger datasets as well consisting of 100 or 200 simulated samples. The method performed as well at finding the right subgroups as in the case of small datasets (Supplementary Fig \ref{SUP} c). In particular, for small variances ($\sigma^2$ = 0-0.3) the method's accuracy is above 98\%, though it decreases for higher variances. 
   Different sizes of subgroups TA and TB were generated, i.e. two samples vs. four and one vs. five respectively. Even if one of the subgroups is composed only of a single sample, the method accurately (more than 98\% of accuracy for small variances) distinguishes it from the rest of the samples (Supplementary Fig \ref{SUP} b).

\subsection{TTMap characterize gene expression deviations of organs from whole fly tissues.}
\label{flyatlas}
To validate TTMap on a biological dataset, we analyzed the fly atlas (www.flyatlas.org). This dataset comprises 4 RNA replicate samples from $33$ drosophila tissues pooled from $50$ males and $50$ females (Supplementary Table \ref{Tab:01}) or third instar feeding larvae or wandering larvae . Global gene expression {of four} replicates from each tissue and of four replicates of whole flies were assessed. The group $\mathsf{N}$ to which each tissue was compared was {composed of} the "whole adult fly" samples. The number of expected subgroups corresponds to the number of organs.  \\

\subsubsection{TTMap compared to standard frameworks on real data}

To compare TTMap to established clustering methods, we used it in parallel with $k$-means \cite{KMeans} and DBSCAN \cite{DBSCAN}, to compute for how many organs the four replicates cluster uniquely together. The parameters for the two standard methods were chosen {to maximize} performance, i.e., $k$ in $k$-means was chosen to be equal to 33 (as there are 33 organs) and minPts in DBSCAN was {set} to 4, since there are four replicates. The $epsilon$ parameter of DBSCAN X was chosen according to guidelines in \cite{DBSCAN}. While DBSCAN and $k$-means clustered the four replicates of 20 and respectively 15 organs uniquely, {TTMap}, not provided with any parameter, clustered 21 organs uniquely (Fig \ref{figure3} b). 
To test the practical stability of TTMap, and compare it to DBSCAN and k-means, the data was quantile-normalised, only DBSCAN and TTMap exhibit stable performance in detecting uniquely clustering organs (Fig \ref{figure3} b), also reflected by the Rand Index (RI), a measure of similarity between two clusterings, which was 0.990 and 0.999 respectively. To further challenge the methods by randomly selecting 50\% of the genes and observe how the clustering is affected, DBSCAN performance drops from 20 to 8 uniquely clustering organs (RI=0.86) whereas TTMap remained stable, with 20 uniquely clustering organs (RI=0.995) (Fig \ref{figure3} c). Thus, TTMap is stable both upon normalization and random subselection. 

\subsubsection{The visual interpretation of TTMap}
TTMap computed that the organ that deviates the least from the whole adult fly (the control) is the whole larva (F) (Fig \ref{figure3} a). The two organs that deviate the most are testes (T) and brain (B) (Fig \ref{figure3} a). Surprisingly, one out of four spermatacea (K3) replicates clustered with three replicates of the adult thoracic muscle (V) and vice versa. This might explain the missed genes for K3 by standard tools and points to a potential labelling mistake (Supplementary Fig \ref{fig_fly_sup}). Replicates of the fatbody of the wandering larva (Wq) and the feeding larvae (Fq) clustered together globally (Fig \ref{figure3} a, overall). However, while three out of four feeding larvae (Fq) samples clustered in the 3rd quartile, the wandering larva samples were in the lowest quartile (Fig \ref{figure3} a). This indicates that the fatbody of the feeding larvae and the wandering larva share differentially expressed genes in comparison to the whole adult fly but these genes deviate to different extent from the control.\\

\subsection{Estrous cycle related gene expression changes in the mammary glands of C57-BL6 and Balb-C mice}
\label{Antoinesdata}

Next we challenged the method by asking whether TTMap can identify subtle gene expression changes that occur in an intact organ related to the alterations of hormone levels. For this, we recurred to an RNAseq dataset collected from intact mammary glands from C57/BL-6 and Balb-C females, which were staged to different phases of the estrous cycle (EC), proestrous (P), estrous (E), and diestrous (D) based on the prevalence of different cell types in their vaginal smears, n=12 \cite{Antoine}.
Principal component analysis grouped samples according to strains (Supplementary Fig \ref{supPCA} a) and analysis was performed separately on each of them \cite{Antoine}. 

Each of the three phases of the EC were once considered as the control and in it TTMap identified the number of outliers; among the 24 estrous samples (Fig \ref{figure4} a, arrowheads), the 23 diestrous or the 23 proestrous (Fig \ref{supPCA}). 

In analogy with the previous work \cite{Antoine}, we first analyzed Balb-C samples and C57/BL6 separately and made three comparisons (E vs D, E vs P and D vs P). A higher overlap between the significant genes in the two strains was found with TTMap compared to the standard analysis \cite{Antoine} with increases from 0 \% to 20\% for estrous versus diestrous (Fig \ref{figure4} b). An increase from 5 (Balb-C) and 19 \% (C57-BL6) to 28 and 32 \% in the comparison between estrous and proestrous and a similar result from 18 (Balb-C) and 47 \% (C57-BL6) to 36 and 45 \% in the comparison between diestrous and proestrous(Fig \ref{figure4} b). 

A high number of significant genes were the same between the common genes of the analyses done separately ("Separate", Fig \ref{figure4} c) on Balb-C and C57-BL6 and when the strains are pooled into one analysis and considered as batches ("Grouped", Fig \ref{figure4} c). 

The heatmaps of the deviation components of the missed genes once samples are grouped into one TTMap analysis, revealed that these genes were
differentially expressed significantly but into opposite directions in Balb-C versus C57-BL6 (on the left Fig \ref{figure4} c, blue negative regulation, yellow positive regulation). These genes have therefore an opposite role in the two strains through the EC. 
On the other hand, by looking at the heatmaps of the deviation components of the missed genes when samples are separated into two TTMap analyses and then overlapped, we discovered that all these genes are differentially expressed in the same direction for the two strains but not to the same extent, and hence did not reach significance either in Balb-C or C57-BL6  (on the right Fig \ref{figure4} c). This suggests that one can pool analyses of two batches into one analysis and gain important information, as genes that vary in the same direction, but not in the same extent and loose unwanted features, as genes that vary significantly but in opposite direction. The relevance of these missed genes is illustrated by the pathway analysis \cite{Panther} of the common genes revealing a "positive regulation of tumour necrosis factor (TNF) superfamily cytokine production" (Fig \ref{figure4} d, Supplementary Fig \ref{supPCA} c and d), missed by standard tools, which is relevant as TNF is also increased through the human menstrual cycle \cite{TNFalpha}.

Based on extent of deviation from the control group, TTMap orders subgroups within each phase (example for proestrous, where the subphases were labelled P1 to P5, P1 being the closest to the control and P5 the furthest, in Fig \ref{figure4} e, where estrous is the control). The significant genes in these subgroups are both known and previously unreported genes that vary throughout the EC of mice (Fig \ref{figure4} e, Supplementary Fig \ref{supPCA} b). 
For instance, TTMap confirmed the interferon signature found in \cite{Antoine} illustrated by the gene \textit{Irf7} but detects also missed genes such as \textit{Mybpc1}, a progesterone target gene \cite{Khan} also shown to be differentially expressed through the human menstrual cycle \cite{Pardo} or \textit{Lalba}, \textit{Csn3} two milk proteins (Fig \ref{figure4} e, Supplementary Fig \ref{supPCA} b). It is apparent that these missed genes have a significant deviation only in subgroups of the proestrous phase compared to estrous, as for instance in P1, P4, P5 for Lalba, with a deviation representing a log fold change bigger than -1, P5 for Csn3 with a deviation bigger than -2, P2, P4, and P5 for Mybpc1 with a fold change bigger than 1.2. In contrast, \textit{Irf7} which was not missed by standard tools has at least 1.2 fold change difference in all subgroups of proestrous. This provides an explanation of why they are missed by standard tools. By searching for the first, overall closest group to control (for instance P1), TTMap also spots samples that are in-between the two phases as illustrated by four estrous samples that are close to diestrous (Supplementary Fig \ref{supPCA} b, arrow).\\

\section{Discussion}
We have developed a {topology-based} clustering tool, Two-Tier Mapper (TTMap) that outperforms existing clustering tools especially when dealing with small sample numbers. TTMap calculates and relates individual deviation from a given control group.

The method includes an improved and extended version on the Mapper algorithm. By its unusual two-tier cover, we have rendered the algorithm theoretically stable with respect to various modifications of the data. The stability and accuracy of TTMap were validated on both \emph{in silico} data and real data. TTMap gives an individual profile of deviation compare to control and relates that to other samples which opens a new perspective for personalized medicine. It is able to face highly variable datasets as illustrated by the discovery of transcriptomic subgroups and outliers of the three phases of the estrous cycle relating possible alterations of hormone levels, rendering a refined description of it.

While previous Mapper applications require selection of multiple parameters that are problem dependent and can hence not be automated \cite{scRNAseqTOP}, \cite{Viral}, \cite{BC}, \cite{ReviewTop},\cite{SpinalCordInjury}, we have optimized TTMap's parameter selection and made it user-independent for global gene expression analysis.

A filter function provides the user with additional information about the composition in terms of quartiles of this function on the global clusters. As implemented here, the filter function takes into account only one specific aspect of refinement. To further enhance the method, one could filter by appropriate metadata such as categorical information or numerical data. All outputs can be compared as the global clusters are independent of the chosen filter function, providing a common reference for all outputs.  \\

TTMap is applicable to other types of data such as proteomic, metabolomic, or even neurological data, such as activity measurements in brain regions as the filter function, the mismatch distance, and the epsilon parameter can be changed and adapted by the user to cover specific needs (unpublished observations). \\

 \begin{Addendum}
\item[\textbf{Acknowledgement}] R.J. was funded by the SNF (SNF 31003A-162550), M.C. and S.O. are supported in part by ERC grant Gudhi (ERC-2013-ADG-339025)\\
  \item[\textbf{Contribution}] R.J. developed the software and did the analyses with the help and supervision of J.R., K.H. and C.B. and J.R. suggested the data to analyse and helped familiarise with it. R.J, M.C. and S.O. analysed the stability of the method and the theoretical aspects. All the authors contributed to the manuscript. R.J. came up with the original idea. C.B. supervised the biological part of the work and K.H. the mathematical.\\
 \item[\textbf{Competing Interests}] The authors declare that they have no
competing financial interests.
 \item[\textbf{Correspondence}] Correspondence 
should be addressed to R. Jeitziner~(email: rachel.jeitziner@epfl.ch).
  \end{Addendum}

\newpage
\newpage
\newpage
 \begin{figure*}[h]
\begin{center}
\includegraphics[height= 17.5cm]{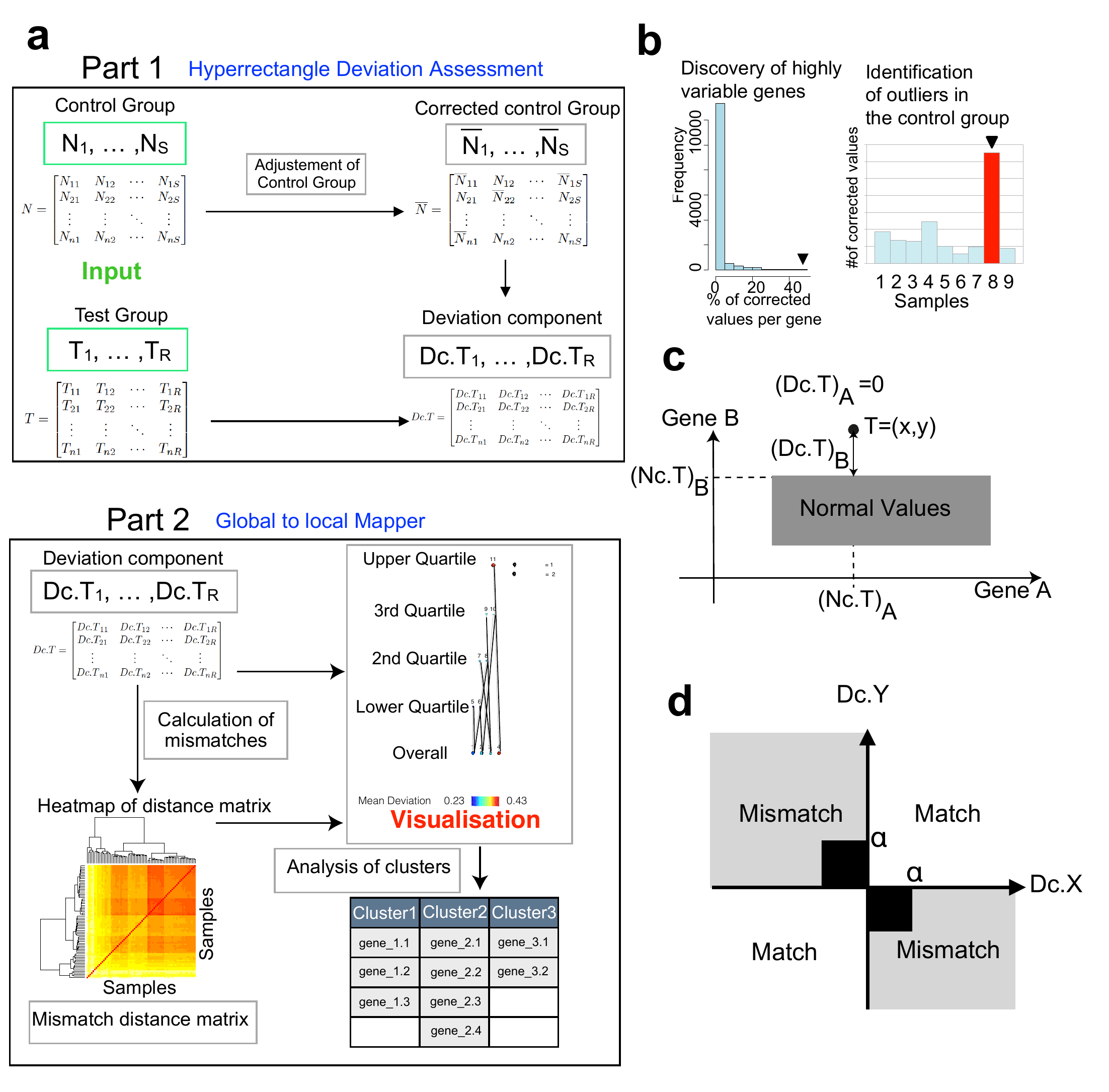}
\caption{Schematic overview of TTMap.}
\label{explanationfig}
\end{center}
\end{figure*}
\begin{figure}[t]
 \contcaption{ Schematic overview of TTMap. (a) The inputs (green) are given by two gene expression matrices, the control ($\mathsf{N}$) and the test group ($\mathsf{T}$), rows represent genes and columns samples. In Part 1, TTMap adjusts the control group for outlier values ($\bar{N}_*$), feature by feature. It calculates deviation from this corrected control group for individual samples in the test group ($Dc.T_*$). In Part 2, TTMap computes a similarity measure, the mismatch distance (represented as a heatmap) using the deviation components. The Mapper \cite{Mapper} algorithm is used with a two-tier cover to generate a visual representation of the clustering creating a network of global clusters (Overall) and local clusters (1st, 2nd, 3rd, 4th quartile of a filter function). It takes as inputs the mismatch distance and the deviation components. (b) Possible outputs after the first part of TTMap: histogram representing the frequency of features per percentage of outliers (left) and a barplot of the number of outliers per sample in the control group (right) to enable the discovery of highly variable genes or samples (red, arrow). (c) Scheme of a test sample $T$ together with its deviation components $Dc.T = (Dc.T_A,Dc.T_B)$ and normal component $Nc.T = (Nc.T_A,Nc.T_B)$ from the hyperrectangle (box) of normal values, example for $n=2$ genes A and B d) Scheme defining a match and a mismatch between two deviations components (Dc) of test samples X and Y with cutoff $\alpha$ to remove noise close to $0$ ($n =1$). The mismatch distance between two samples is the sum of mismatches through all the genes.}
\end{figure}

\newpage

 \begin{figure*}[h]
\begin{center}
\includegraphics[height= 22cm]{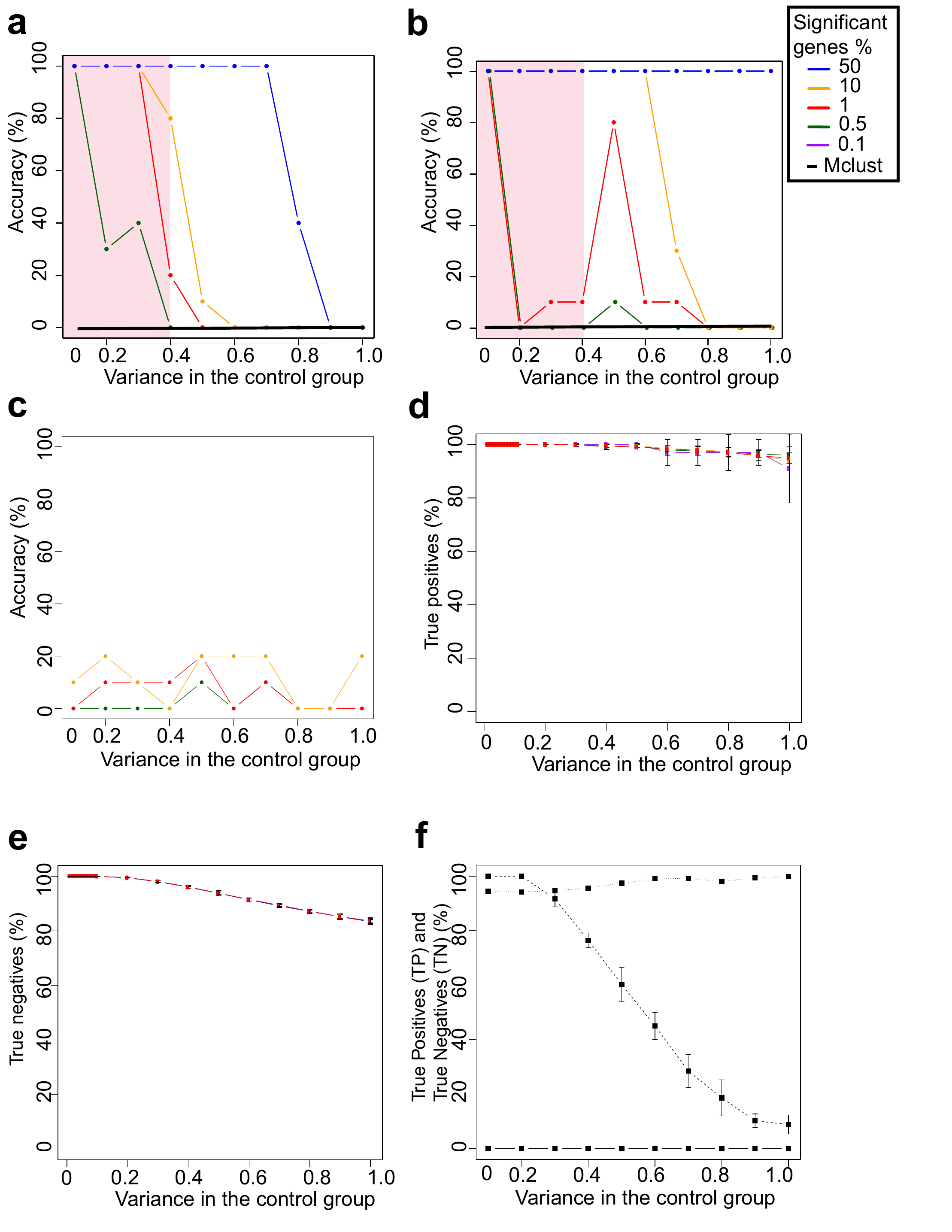}
\caption{In silico validation of TTMap.(continued on the next page)}
\label{figure2}
\end{center}
\end{figure*}
\begin{figure}[t]
 \contcaption{ In silico validation of TTMap. (a) Plot showing the accuracy of TTMap in percentage of time it correctly finds the right subgroups on an in silico dataset over a range of different variances, $N > 30$ individual curves were established for different percentages of significant genes, the accuracy of Mclust on the same dataset is shown in black, using epsilon with probability 0.025 (b) using epsilon with probability 0.975 (c) Plot showing Mclust on the deviation components $N=10$ per condition. (d) Percentage of true positives and (e) true negatives when the right groups are found $N> 30$ per condition (f) True positives (TP) and True negatives (TN) using moderated-t-test when the correct groups are given and when they are unknown. }
 \end{figure}
 \newpage
 \begin{figure*}[h]
\begin{center}
\includegraphics[height=17cm]{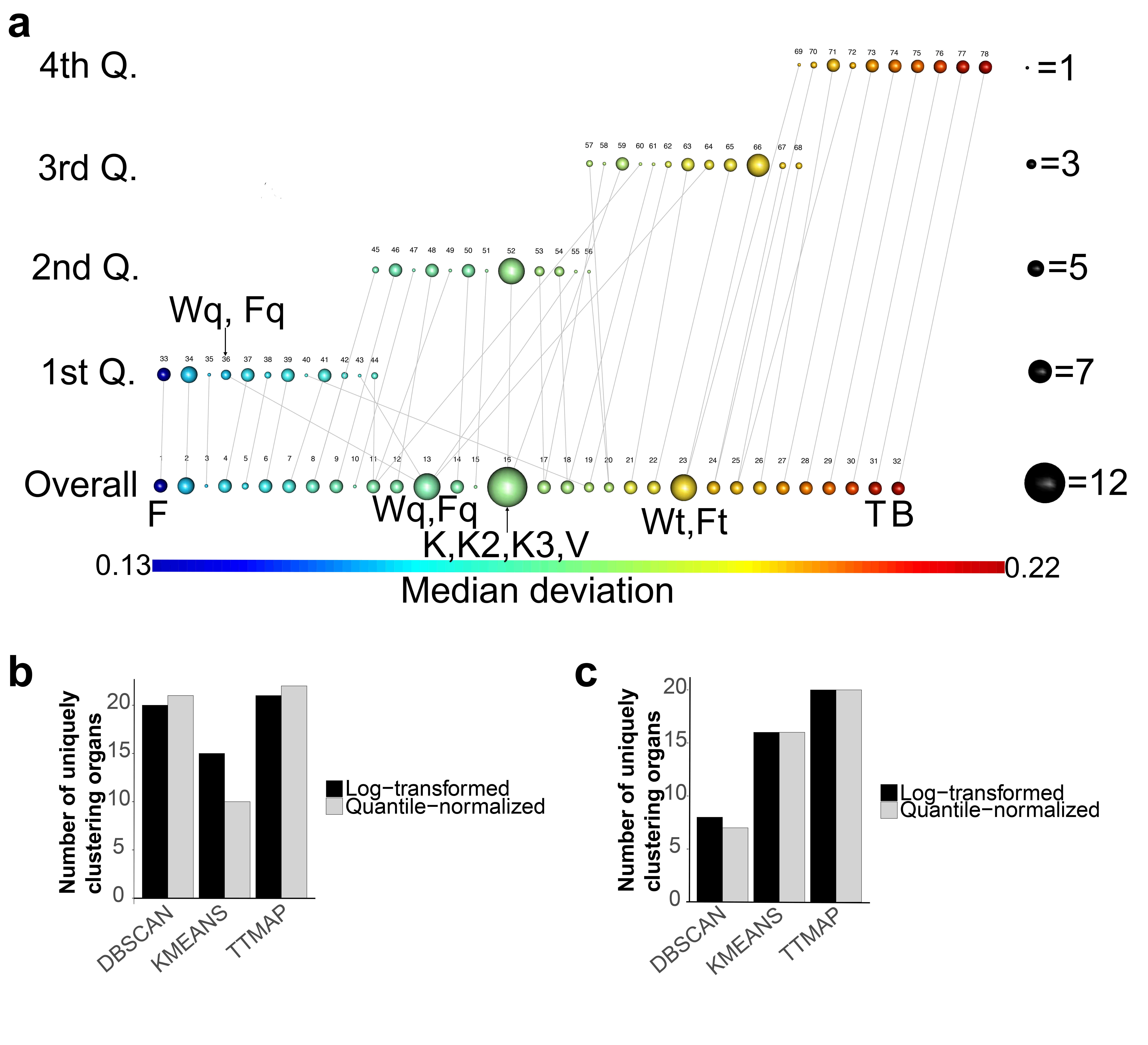}
\caption{Validation of TTMap on a well-characterized dataset from the fly-atlas (continued on the next page).}
\label{figure3}
\end{center}
\end{figure*}
\begin{figure}[t]
 \contcaption{ TTMap characterize deviations of organs from whole fly tissues. (flyatlas: GSE7763). (a) Output of TTMap showing the global clusters (Overall) that capture overall differences and local clusters within each quartile (1st, 2nd, 3rd, 4th Q., legend on the left) of the amount of deviation function with its links to the global clusters. The size of the sphere corresponds to the number of samples in the cluster (legend of the size of spheres on the right), the color the average amount of deviation. The number above the sphere is an identification and the letter under it is reflecting the organs that are in that cluster (K: spermatacea virgin, K2: spermatacea mated and K3: spermatacea virgin (redone), V: adult toracic muscle, Wq: fatbody of the wandering larvea, Fq: fatbody of the feeding larvea, F: whole larvea, T: Testes, B: Brain). On the bottom a legend of the color code, representing the mean amount of deviation.  (b) Barplot representing the number of uniquely clustering organs on log-transformed data and on quantile-normalised data for DBSCAN, Kmeans and TTMap to measure the stability of the methods by normalisation. (c) Barplot representing the number of uniquely clustering organs when the data is randomly subselected for  50\% of the genes on log-transformed data and on quantile-normalised data for DBSCAN, Kmeans and TTMap to measure the stability of the methods by random subselection.}
\end{figure}
 \begin{figure*}[h]
\begin{center}
\includegraphics[height= 22cm]{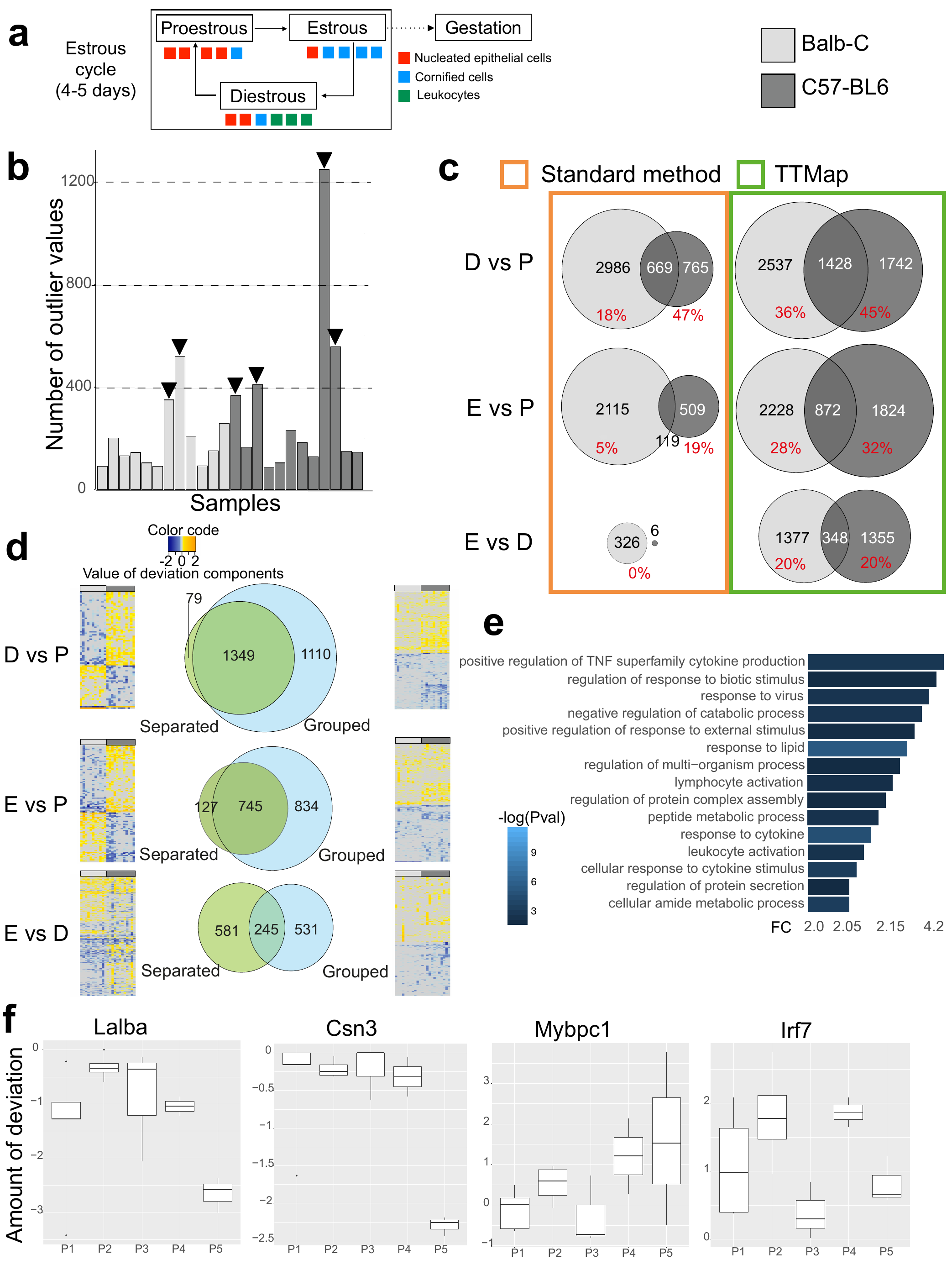}
\caption{Estrous cycle related gene expression changes in the mammary glands of C57Bl6 and BalbC mice. (continued on the next page).}
\label{figure4}
\end{center}
\end{figure*}
\begin{figure}[t]
\contcaption{Estrous cycle related gene expression changes in the mammary glands of C57Bl6 and BalbC mice. (a) Scheme of mice estrous cycle lasting around 4-5 days. The estrous cycle is divided into Proestrous (P) followed by Estrous (E) and then by Diestrous (D) phase, determined according to the prevalence of different cell types (nucleated epithelial cells, cornified cells, leukocytes) in vaginal cytology.  After E, mice can undergo gestation and D is skipped. (b) Barplot representing the number of outlier values in the control group (estrous phase) per sample. Samples having a high number of outlier values and a prevalence to remain isolated during clustering when E is the test group are outliers (arrowhead).  (b) Venn diagrams of the differentially expressed genes between E vs P, D vs E, and E vs D using standard analysis tools (orange) and TTMap (green) on Balb-C compared to C57-BL6 analysed separately. In red, the fraction of common significant genes per strain (\% over total number of significant genes). (c) Venn diagrams of the common differentially expressed genes when the analysis is done separately on the two mouse strains (Separated) or with the two mouse strains combined (Grouped) into one analysis using TTMap. Adjacent to them heatmaps of the deviation components illustrating in each situation the reason why the genes were missed.  (d) Panther pathway analysis \cite{Panther} of significant genes identified by TTMap in the comparison D vs P shown by Fold Change (FC) of importance of the pathway with \textit{-log(Pval)} as a color code (e) Boxplot representing the deviation component values in the identified subgroups of P (P1, P2, P3, P4, P5) by TTMap ordered by amount of deviation compared to the estrous samples (controls) of the genes \textit{Lalba}, \textit{Csn3}, \textit{Mybpc1} and \textit{Irf7}.}
\end{figure}

 \section{Online Methods}
 \label{Online}
 \subsection{Data preprocessing/input formats}

Prior to the analysis, the collected data are log-transformed and grouped into two separate tables, where columns are samples and rows are features from: 
\begin{itemize}
\item a group $\mathsf N$, called the \emph{normal (or control) group } the elements of which are denoted $N_{1}, \dots, N_{S}$, where $S$ is the number of collected samples in this group.  
\item a group $\mathsf T$, called the \emph{test group}, the elements of which are denoted $T_{1}, \dots, T_{R},$ where $R$ is the number of collected samples in this group.
\end{itemize}
The number of features measured (e.g., number of genes expression levels of which were determined) in each sample is written $n$. Thus, each element in group $\mathsf T$ and group $\mathsf N$ is a vector in $\mathbb{R}^n$.\\
If {different numbers of features have been measured for groups $\mathsf T$ and $\mathsf N$}, then {HDA considers only} the features measured across the whole data set.\\

\subsection{Data outputs}
The following files are being produced in the analysis :
\begin{enumerate}
\item[File 1.] standardized matrix $\bar{N}.$
\item[File 2.] number of modified features per sample, indicating outliers in the control group.
\item[File 3.] number of modified samples per feature, giving the features that change in the control group. 
\item[File 4.] deviation components  per test sample $Dc.T_1, \dots,Dc.T_R,$ indicating the pattern of deviation compared to the control.
\item[File 5.] normal components of test samples $Nc.T_1, \dots ,Nc.T_R.$
\item[File 6.] distance matrix.
\item[File 7.] visual representation of the clustering, giving subgroups in the test samples.
\item[File 8.] description of the clusters in File 7, with information.
\end{enumerate}

\subsection{Parameter selection}
The following parameters are computed and can be changed
\begin{itemize}
\item $e$ : can be changed by the user, the default value is $e$ is the $90^{\text{th}}$ percentile of the standard deviations for every feature multiplied by $\frac{2}{\sqrt{S}}$, where $S$ is the number of samples in the control group. 
\item If the user wants to remove features that have more than p\% of NAs in the control group. The  parameter $p$ is set to 100 \% by default.
\item $\epsilon$ : Assuming that the two vectors $X$ and $Y$ follow the same normal distribution $\mathcal{N}(\mu_i, \sigma_i^2)$ for feature $i$, the parameter $\epsilon$ is estimated using the data. Feature by feature the probability to be a mismatch is calculated. Let $X_k$ be the random vector representing the gene expressions of a sample $T_k$. Therefore, let
 $$p_{1j,k} = P(X_{kj} < \min_{i \in \mathcal{I}(T_{k})}({\overline{N}}_i)_j),\textrm{ the probability to be underexpressed compared to normal values}$$
 $$p_{2j,k}= P(X_{kj} >\max_{i \in \mathcal{I}(T_{k})}({\overline{N}}_i)_j),\textrm{ the probability to be overexpressed compared to normal values}$$
 $$p_{3j,k} = P( \min_{i \in \mathcal{I}(T_{k})}({\overline{N}}_i)_j <X_{kj}  < \max_{i \in \mathcal{I}(T_{k})}({\overline{N}}_i)_j), \textrm{ the probability to be inside the normal range} $$
 Then, we define
 $$p_{1j,k}^{\alpha} = P(X_k < \min_{i \in \mathcal{I}(T_{k})}({\overline{N}}_i)_j-\alpha)),$$
 $$p_{2j,k}^{\alpha}  = P(X_k>\max_{i \in \mathcal{I}(T_{k})}({\overline{N}}_i)_j+\alpha)).$$
 Hence the probability $(P_k^l)_j$ of a mismatch between the $j$-th gene of $(X_k,X_l)$ is equal to : 
 $ ((p_{3j,k}+p_{1j,k} )\cdot p_{2j,l}^\alpha) +(p_{3j,k}+p_{2j,k})\cdot p_{1j,l}^\alpha)+((p_{3j,l}$ \\$+p_{1j,l})\cdot p_{2j,k}^\alpha) +(p_{3j,l}+p_{2j,l})\cdot p_{1j,k}^\alpha)-p_{1j,k}^\alpha\cdot p_{2j,l}^\alpha -p_{2j,k}^\alpha\cdot p_{1j,l}^\alpha,$ where for example $((p_{3j,k}+p_{1j,k} )\cdot p_{2j,l}^\alpha)$ would represent the probability that $X_k$ for gene $j$ is either as the control $(p_{3j,k})$ or lower than the control $(p_{1j,k})$ whereas $X_l$ is marginally (more than alpha) higher than the control $(p_{2j,l}^\alpha)$, and so it represents a mismatch.
 Using Chern-Stein's theorem, it is known that if $n>>S$, and if the probabilities accumulated around 0 as is the case for gene expression data, then the sum over all genes of mismatches follows a Poisson distribution with mean $\sum_{j=1}^n (P_k^l )_j$.
 This in turn allows one to determine how significant the number of mismatches between X and Y is, if both vectors follow the same distribution. Hence, $\epsilon$ is given by $P(\sum_{j=1}^n (P_k^l )_j < \epsilon) = \beta,$ which can be obtained from the quantiles of a Poisson law. Thus, samples are linked if the number of mismatches between them is less than $\epsilon$, which is the $\beta \%$ confidence threshold of mismatches for samples following the same distribution.\\
 If $\epsilon$ is chosen such that $P(\sum_{j=1}^n (P_k^l )_j < \epsilon) = 0.025$, it means that only in $2.5\%$ of the cases if $X_k$ and $X_l$ are distributed in the same way, they would have such a small number of mismatches and therefore it is certain that $X_k$ and $X_l$ must be clustered together. In the same way, if  $\epsilon$ is chosen such that $P(\sum_{j=1}^n (P_k^l )_j < \epsilon) = 0.975$, it means that only in $97.5\%$ of the cases if $X_k$ and $X_l$ are distributed in the same way, they would have such a high number of mismatches and therefore it is certain that $X_k$ and $X_l$ must be separated.
The user can therefore choose either to cluster samples together only if one is sure that samples should be clustered together (0.025) or choose to separate samples only if one is sure that samples need to be separated or finally the user has the option to put another value for parameter $\epsilon$, when the \% of mismatches to be expected is already known. \\
\item Distance: Alternative distances, such as correlation distance, Euclidean distance, useful when there is no control group, and complete mismatch distance, a stringent version of the mismatch distance defined above, are implemented in GLMap and can be selected. Of note, in those cases the parameter $\epsilon$ needs to be adapted and has no appropriate default value. The mismatch distance is appropriate for gene expression data, since it captures deviation of samples from the control values with the same orientation, regardless of the magnitude of deviation.\\
 \item The default filter function can be changed by the user and any metadata can be taken as input, which needs to be a vector of the same length as the number of samples. 
\item $S$: If the user is interested in the deviation of a specific subset of {features, e.g.,} genes linked to a certain pathway, then the set $S$ can be modified appropriately, it is provided as a vector of gene identifications.\\
\end{itemize}

\subsection{Data sources}

Drosophila Affymetrix array data files were downloaded from GEO accession no GSE7763. Mouse data was kindly provided by A. Snijders and colleagues \cite{Antoine}.

\subsection{Synthesised data}
Since microarray gene expression data, is modelled as a normal distribution (\cite{Noiseestimation}), the simulated data has been generated as follows. For a fixed natural number $m$ less than 10,000,  {K random lists of 10,000 real numbers} are generated each, where
   $C_1, \dots C_{K/2}$ are the $K/2$ controls and $TA_1 , TA_2, \dots, TA_{K/4}, TB_1, TB_2, \dots, TB_{K/4},$ are the test samples, each with 10,000 genes. The subgroups $TA$ and $TB$ have a mean per gene that is $\Delta$ times higher, respectively lower than the mean of the control in $m$ genes.   Hence,     $$(C_1)_i, \dots, (C_{K/2})_i \in \mathcal{N} (\mu, \sigma^2)$$ for all $1\leq i\leq 10,000$, and $$(TA_1)_i, (TA_2)_i, \dots (TA_{K/4})_i , (TB_1)_i, (TB_2)_i, \dots, (TB_{K/4})_i\in \mathcal{N} (\mu, \sigma^2)$$ for all $1\leq i\leq  10,000-m$, while $$(TA_1)_i, (TA_2)_i, \dots, (TA_{K/4})_i \in \mathcal{N} (\mu+ \Delta, \sigma^2),$$ 
and $$(TB_1)_i, (TB_2)_i, \dots, (TB_{K/4})_i \in \mathcal{N} (\mu - \Delta, \sigma^2)$$ for all  $10,000-m<i \leq10,000$ and $K$ is either $12$, $200$ or $400$.\\
 
\subsection{Code availability}
TTMap is implemented as an open-source R package under revision at the Bioconductor.

\subsection{Theoretical part}
In this section, all functions are assumed to be of Morse type, as defined in~\cite{Carlsson09b}.
This mild assumption is purely technical and assures that the mathematical objects we deal with are well defined. All the assumptions made in the stability theorems are verified concerning TTMap subsequently in section \ref{verification}.

\subsubsection{Mathematical background}

\paragraph{Reeb Graphs}
\label{sec:ReebGraphDef}
Given a topological space $X$ and a continuous function 
$f:X\rightarrow\R$, we define the equivalence relation $\sim_f$ between points of $X$ by:
%
\begin{align*}
x\sim_f y\ &\Longleftrightarrow\ f(x)=f(y) \emph{ and } x,y\ \emph{belong to the same} \\
& \emph{connected component of}\ f^{-1}(f(x))=f^{-1}(f(y)).
\end{align*}
The \emph{Reeb graph}~\cite{Reeb46}, denoted by $\Reeb_f(X)$, is the quotient space $X/\sim_f$.


As $f$ is constant on equivalence classes, there is an induced map $\tilde{f}~:~\Reeb_f(X)~\rightarrow~\R$ such that $f~=~\tilde{f}~\circ~\pi$,
where $\pi$ is the quotient map $X\to \Reeb_f(X)$.
%
%
If $f$ is a function of Morse type, then
the Reeb graph is a multigraph~\cite{deSilva16}, whose nodes are in one-to-one correspondence with the connected components
of the critical level sets of $f$. 


\paragraph{Extended Persistence}
\label{sec:Persistence}

Given any Reeb graph $\Reeb_f(X)$,
the so-called {\em extended persistence diagram} $\Dg(\tilde f)$ is a multiset of points in the
Euclidean plane $\R^2$ that can be computed with {\em extended persistence theory}~\cite{Cohen09, Chazal16}.
Each of its points has a specific {\em type}, which is either $\Ord_0$, $\Rel_1$, $\Ext_0^+$ or $\Ext_1^-$.
Orienting the Reeb graph vertically so $\tilde{f}$ is the height function,
we can see each connected component of the graph as a trunk with
multiple branches, some oriented upwards, others oriented downwards
and holes.  The following correspondences are obtained, where the
{\em vertical span} of a feature is the span of its image by~$\tilde{f}$:
\begin{itemize}
\item The vertical spans of the trunks are given by the points in $\Ext_0^+(\tilde{f})$;
\item The vertical spans of the branches that are oriented downwards are given by the points in $\Ord_0(\tilde{f})$;
\item The vertical spans of the branches that are oriented upwards are given by the points in $\Rel_1(\tilde{f})$;
\item The vertical spans of the holes are given by the points in $\Ext_1^-(\tilde{f})$. 
\end{itemize}
%
These correspondences
provide a dictionary to read off the structure of the Reeb graph from
the corresponding extended persistence diagram (Figure~\ref{fig:sign}.a).
Note that it is a bag-of-features type descriptor, taking an inventory of all
the features (trunks, branches, holes) together with their vertical
spans, but leaving aside the actual layout of the features. As a
consequence, it is an incomplete descriptor: two Reeb graphs with the
same persistence diagram may not be isomorphic as combinatorial graphs or as metric graphs.



\subsubsection{Generalized structure of TTMap}

Let $X$ be a topological space and let $f:X\to\R$ be a Morse-type
function. Consider a family of pairwise disjoint intervals of~$\R$ with
non-empty interiors, such that
the union of all the intervals is still an interval.  Add $\R$ to this
family and call the result~$\I$.
Considering the class of Morse-type pairs $(X,f)$ such that $\I$ is a cover
of~$\im(f)$, our aim is to study the structure of $\Mapper(X,f,\I)$
and its stability with respect to perturbations of~$(X,f)$ within this class.
Note that, $TTMap(Dc.T,\tau, \mathcal{I})$ is a special case of $\rm M(P,f,\mathcal{I})$, where $P$ is given by ${\rm Dc.T}$ and $X$ is the corresponding underlying support, $f$ is given by $\tau$ and $\I$ is given by the quantiles $q_{ab}$ and the real line and $\delta$ is given by the parameter $\epsilon$ (\ref{GLMap}).

\begin{defin}
We define the following descriptor for $\Mapper(X,f,\I)$: 
$$\DM(X,f,\I):=(\Dg(\tilde{f}),\phi, \{\Delta_I\}_{I\in\I}),$$ 
where: \begin{itemize}
\item $\phi:\Dg(\tilde{f})\rightarrow \Ext_0^+(\tilde{f})$ maps a persistence pair (i.e. $(a,b)$ where $a$ and $b$ are the birth and the death time respectively of a topological feature to
the connected component of $X$ to which its corresponding feature belongs,
\item $\Delta_I=\{(x,x)\ |\ x\in I\}$ is the diagonal subset of  $I\times I$.
\end{itemize}
\end{defin}
\noindent Intuitively,  $\Mapper(X,f,\I)$ can be reconstructed from $\DM(X,f,\I)$
in 3 steps (Figure~\ref{fig:sign}.c, d, e and f):
\begin{enumerate}
\item  Create one super-node per point in $\Ext_0^+(\tilde{f})$. 
\item For each interval~$I\in\I$, create one node per point $(x,y)\in \Dg(\tilde{f})$ such that~$I$ is contained entirely in the lifespan of~$(x,y)$, 
which is materialized in the descriptor~$\DM(X,f,\I)$ by the fact that the line segment $\Delta_{(x,y)}$ bounded by the horizontal and vertical 
projections of~$(x,y)$ onto the diagonal $\Delta$ contains~$\Delta_I$. 
If $(x,y)\in \Ord_0(\tilde f)\cup\Rel_1(\tilde f)\cup\Ext_0^+(\tilde f)$ then create a vertex also if $I$ contains~$x$. If $(x,y)\in \Ext_0^+(\tilde f)$ then create a vertex also if $I$ contains~$y$. 
\item Draw the links prescribed by $\phi$ between the super-nodes and the rest of the nodes.
\end{enumerate}
%
%
\begin{thm}\label{completeness}\label{th:complete}{\bf Completeness.}
$\DM(X,f,\I)$ is a complete descriptor of $\Mapper(X,f,\I)$.
\end{thm}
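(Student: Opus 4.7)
The plan is to verify that the three-step reconstruction procedure sketched in the lines immediately preceding the statement genuinely recovers $\Mapper(X,f,\I)$ up to isomorphism from the descriptor $\DM(X,f,\I)$, so that every piece of information needed — the list of nodes attached to each cover element $I$, the assignment of each node to a connected component of $X$, and the edges of the graph — can be read off from the three ingredients $\Dg(\tilde f)$, $\phi$ and $\{\Delta_I\}_{I\in\I}$.

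First I would reduce the problem to the Reeb graph. Since the quotient map $\pi:X\to\Reeb_f(X)$ is continuous and surjective with path-connected fibres, taking connected components of preimages commutes with $\pi$, and therefore $\Mapper(X,f,\I)\cong\Mapper(\Reeb_f(X),\tilde f,\I)$. Combined with the observation that the cover $\I$ in the generalized setting consists of pairwise disjoint bounded intervals together with the whole line $\R$, this isolates exactly two kinds of nodes to recover: the ``super-nodes'' coming from the $\R$ piece, which are in canonical bijection with the connected components of $\Reeb_f(X)$ and hence with the points of $\Ext_0^+(\tilde f)$; and the ``quartile nodes'' coming from each bounded $I\in\I$, which are the connected components of $\tilde f^{-1}(I)$.

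Second I would count the quartile nodes per interval by invoking the dictionary between $\Dg(\tilde f)$ and the features of the Reeb graph recalled in the Extended Persistence paragraph: each trunk corresponds to a point of $\Ext_0^+$ with vertical span $[x,y]$, each downward branch to a point of $\Ord_0$, each upward branch to a point of $\Rel_1$, and each hole to a point of $\Ext_1^-$. By a case analysis on these four types, a connected component of $\tilde f^{-1}(I)$ is produced exactly when the span of the feature meets $I$ in the appropriate way, which is precisely the containment rule $\Delta_I\subseteq\Delta_{(x,y)}$ of step~2, together with the two supplementary boundary rules (creating a node when $x\in I$ for points of type $\Ord_0/\Rel_1/\Ext_0^+$, and when $y\in I$ for points of type $\Ext_0^+$) that pick up the branches and trunks whose birth or death sits strictly inside $I$.

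Third I would recover the edges. Because the bounded intervals of $\I$ are pairwise disjoint, the only edges in $\Mapper(X,f,\I)$ join a quartile node to a super-node, and such an edge exists exactly when the corresponding connected component of $\tilde f^{-1}(I)$ lies inside the corresponding connected component of $\Reeb_f(X)$. This inclusion is exactly the information recorded by $\phi$, which maps each persistence pair to the point of $\Ext_0^+$ indexing the trunk it belongs to. The main obstacle in the plan is the bookkeeping inside step~2: one must confirm that the combined rules yield exactly the right number of connected components of $\tilde f^{-1}(I)$ — no overcount of features whose endpoints lie on the boundary of $I$, no undercount of branches whose birth or death is strictly interior — and that the boundary-case prescriptions match the four feature types uniformly. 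Once this case analysis is settled, completeness of the descriptor follows by assembling the nodes and edges produced by the three steps into a graph isomorphic to $\Mapper(X,f,\I)$.
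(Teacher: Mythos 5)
Your overall architecture coincides with the paper's: super-nodes in bijection with $\Ext_0^+(\tilde f)$ via the connected components of $\Reeb_f(X)$, interval nodes counted off the diagram, and edges recovered from $\phi$ using the fact that the bounded intervals of $\I$ are pairwise disjoint so that only quartile-node-to-super-node edges occur. The gap is in your second step. You assert that ``by a case analysis on these four types, a connected component of $\tilde f^{-1}(I)$ is produced exactly when the span of the feature meets $I$ in the appropriate way,'' treating the trunk/branch/hole dictionary as something one can simply enumerate against. But that dictionary, as recalled in the background, only records the vertical spans of the features; the statement you actually need is the counting identity
\[
\#\left\{C \ :\ C \textrm{ is a connected component of } \tilde f^{-1}(\{\alpha\})\right\} \;=\; \#\left\{(x,y)\in\Dg(\tilde f)\ :\ \alpha\in\textrm{lifespan}\,(x,y)\right\},
\]
extended to preimages of intervals, and this is not a formal consequence of the dictionary --- it is the substantive content of the theorem. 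A Reeb graph carries no canonical decomposition into trunks, branches and holes (for a graph containing a cycle, which arcs constitute the ``hole'' and which the ``trunk'' is a choice), so the claim that each component of $\tilde f^{-1}(I)$ belongs to exactly one feature alive on $I$, with no overcount and no undercount, is precisely what must be proved rather than a starting point.

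The paper establishes this identity homologically: it applies Mayer--Vietoris to $A=\tilde f^{-1}((-\infty,\alpha])$ and $B=\tilde f^{-1}([\alpha,+\infty))$, chases dimensions through the resulting exact sequence, and invokes the structure results of Bauer--Ge--Wang and Carri\`ere--Oudot (namely $H_p(\Reeb_f(X))=0$ for $p\geq 2$, $\dim H_0(\Reeb_f(X))=|\Ext_0^+(\tilde f)|$, $\dim H_1(\Reeb_f(X))=|\Ext_1^-(\tilde f)|$, and the quadrant counts for $K_0$ and $K_1$) to express every term as a count of diagram points in suitable quadrants $Q^\alpha_{\rm NW}$ and $Q^\alpha_{\rm SE}$. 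Your proposal correctly locates the difficulty --- you explicitly flag the ``bookkeeping inside step~2'' as the main obstacle --- but the plan as written defers exactly that bookkeeping, so until you supply an argument for the level-set counting identity (whether by the paper's Mayer--Vietoris computation or by a genuinely rigorous combinatorial induction on the critical values of $\tilde f$), the proof is incomplete at its central point.
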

%
%

%
\begin{proof}

At any level~$\alpha\in\R$, the following equality holds:
\begin{equation}\label{eq:level}
\begin{split}
&\#\left\{C\ :\ C{\rm\ is\ a\ connected\ component\ of\ }\tilde{f}^{-1}(\{\alpha\})\right\}
\ =\   \\
 & \#\left\{(x,y)\in\Dg(\tilde{f})\ :\ \alpha\in\textrm{lifespan}\,(x,y)\right\},
 \end{split}
\end{equation}
where: 
\[
\textrm{lifespan}\,(x,y)\ =\ \left\{\begin{array}{ll}
[x,y] & \mbox{if $(x,y)\in\Ext_0^+(\tilde f)$}\\[0.5ex]
(y,x) & \mbox{if $(x,y)\in\Ext_1^-(\tilde f)$}\\[0.5ex]
[x,y) & \mbox{if $(x,y)\in\Ord_0(\tilde f)$}\\[0.5ex]
(y,x] & \mbox{if $(x,y)\in\Rel_1(\tilde f)$}
\end{array}\right.
\]

Indeed, let $\alpha \in\R$.  Assume for simplicity that
$\alpha\not\in\Crit(f)$ (if $\alpha\in\Crit(f)$ then the same analysis
holds with the extra technicality that the type of each interval endpoint, open or closed, must be taken into account).
Define the following quadrants (Figure~\ref{fig:sign}.b):
\begin{align}
Q^\alpha_{\rm NW}&=\{(x,y)\in\R^2\ :\ x\leq\alpha{\rm\ and\ }y\geq\alpha\}  \nonumber \\
Q^\alpha_{\rm NE}&=\{(x,y)\in\R^2\ :\ x\geq\alpha{\rm\ and\ }y\geq\alpha\}\nonumber\\
Q^\alpha_{\rm SW}&=\{(x,y)\in\R^2\ :\ x\leq\alpha{\rm\ and\ }y\leq\alpha\} \nonumber \\
 Q^\alpha_{\rm SE}&=\{(x,y)\in\R^2\ :\ x\geq\alpha{\rm\ and\ }y\leq\alpha\}\nonumber
\end{align}
%
%
%
Since points in $\Ord_0(\tilde f)$ and $\Ext_0^+(\tilde f)$ are located above the diagonal and 
points in $\Ext_1^-(\tilde f)$ and $\Rel_1(\tilde f)$ are located below,
proving Equation~(\ref{eq:level}) amounts to showing that 
\begin{equation}\label{eq:tbp}
\begin{split}
& {\rm dim}\left(H_0\left(\tilde f^{-1}(\{\alpha\})\right)\right)=|\Ord_0(\tilde f)\cap Q^\alpha_{\rm
  NW}|  \\
  & + |\Ext_0^+(\tilde f)\cap Q^\alpha_{\rm NW}| + |\Ext_1^-(\tilde
f)\cap Q^\alpha_{\rm SE}| + |\Rel_1(\tilde f)\cap Q^\alpha_{\rm
 SE}|.
  \end{split}
\end{equation}
 For this the Mayer-Vietoris theorem is used with spaces $A=\tilde
 f^{-1}((-\infty,\alpha])$, $B=\tilde f^{-1}([\alpha,+\infty))$,
    $A\cap B = \tilde f^{-1}(\{\alpha\})$, and $A\cup B =
    \Reeb_f(X)$. This theorem can be used because the
    Morse-type condition implies that $A, B$ are deformation retracts
    of neighborhoods $A', B'$ in $\Reeb_f(X)$ with $A'\cap B'$
    deformation retracting onto $A\cap B$. Hence, the following sequence is 
    exact:
\begin{align}
H_2(\Reeb_f(X)) & \overset{\partial_2}{\longrightarrow} H_1\left(\tilde f^{-1}(\{\alpha\})\right) \nonumber \\
& \overset{\phi}{\longrightarrow} \overbrace{H_1\left(\tilde f^{-1}((-\infty,\alpha])\right)\oplus H_1\left(\tilde f^{-1}([\alpha,+\infty))\right)}^{K_1} \nonumber \\
& \overset{\psi}{\longrightarrow} H_1(\Reeb_f(X)) \nonumber \\
& \overset{\partial_1}{\longrightarrow} H_0\left(\tilde f^{-1}(\{\alpha\})\right) \nonumber \\
& \overset{\zeta}{\longrightarrow} \underbrace{H_0\left(\tilde f^{-1}((-\infty,\alpha])\right)\oplus H_0\left(\tilde f^{-1}([\alpha,+\infty))\right)}_{K_0} \nonumber \\
& \overset{\xi}{\longrightarrow} H_0(\Reeb_f(X)) \overset{\partial_0}{\longrightarrow} 0 \nonumber
\end{align}
To be more specific, exactness gives the following relations:
\begin{align*}
\im(\partial_2)&=\ker(\phi)\Label{eq:rhophi} & \im(\partial_1)&=\ker(\zeta)\Label{eq:zetachi} \\
\im(\phi)&=\ker(\psi)\Label{eq:phipsi} & \im(\zeta)&=\ker(\xi)\Label{eq:zetaxi} \\
\im(\psi)&=\ker(\partial_1)\Label{eq:psichi} & \im(\xi)&=\ker(\partial_0)\Label{eq:xisigma}
\end{align*}

It follows from (\ref{eq:xisigma}) and from \cite{Bauer14} that
\begin{equation}\label{eq:dimimxi}
\begin{split}
\dim(\im(\xi)) &=\dim(\ker(\partial_0))=\dim(H_0(\Reeb_f(X)))\\
& =|\Ext_0^+(\tilde f)|.
\end{split}
\end{equation}

Moreover, according to Theorem~2.9 in~\cite{Carriere15c}, we have $H_p(\Reeb_f(X)) = 0$ for any $p\geq 2$.  Using (\ref{eq:rhophi}), it follows that $\im(\partial_2) = 0 = \ker(\phi)$, hence 

\begin{multline}\label{eq:dimimphi}
\begin{split}
0&=\dim\left(H_1\left(\tilde f^{-1}(\{\alpha\})\right)\right)=\dim(\ker(\phi)) + \dim(\im(\phi))\\
&=\dim(\im(\phi)).
\end{split}
\end{multline}

Using equations (\ref{eq:zetachi}) to (\ref{eq:dimimphi}) and Theorem 2.5 in \cite{Carriere15c}, the following equalities hold:
\begin{align}
&\dim\left(H_0\left(\tilde f^{-1}(\{\alpha\})\right)\right)=\dim(\ker(\zeta)) + \dim(\im(\zeta)) \nonumber \\
& = \dim(\im(\partial_1)) + \dim(\ker(\xi)) \nonumber \\
& = \dim(H_1(\Reeb_f(X))) - \dim(\ker(\partial_1)) + \dim(\ker(\xi)) \nonumber \\
& = |\Ext_1^-(\tilde f)| - \dim(\im(\psi)) + \dim(\ker(\xi)) \nonumber \\
& = |\Ext_1^-(\tilde f)| - \dim(K_1) + \dim(\ker(\psi)) + \dim(\ker(\xi)) \nonumber \\
& = |\Ext_1^-(\tilde f)| - \dim(K_1) + \dim(\im(\phi)) + \dim(\ker(\xi)) \nonumber \\
& = |\Ext_1^-(\tilde f)| - \dim(K_1) + \dim(\ker(\xi)) \nonumber \\
& = |\Ext_1^-(\tilde f)| - \dim(K_1) + \dim(K_0) - \dim(\im(\xi)) \nonumber \\
& = |\Ext_1^-(\tilde f)| - \dim(K_1) + \dim(K_0) - |\Ext_0^+(\tilde f)| \nonumber
\end{align}

It remains to compute $\dim(K_1)$ and $\dim(K_0)$. Using the correspondence between connected components and branches
of $\Reeb_f(X)$ and points of $\Dg(\tilde f)$~\cite{Bauer14}, it holds that
\begin{align}
\dim(K_1)&=\dim\left(H_1\left(\tilde f^{-1}((-\infty,\alpha])\right)\right)\nonumber\\
& \ \ \ \ \  \ \ \ \ \ +\dim\left(H_1\left(\tilde f^{-1}([\alpha,+\infty))\right)\right)\nonumber\\
&=|\Ext_1^-(\tilde f)\cap Q^\alpha_{\rm SW}| + |\Ext_1^-(\tilde f)\cap Q^\alpha_{\rm NE}|
\end{align}
and 
\begin{align}
\dim(K_0)&=\dim\left(H_0\left(\tilde f^{-1}((-\infty,\alpha])\right)\right) \nonumber\\
&  \ \ \ \ \  \ \ \ \ \  +\dim\left(H_0\left(\tilde f^{-1}([\alpha,+\infty))\right)\right)\nonumber\\
&=|\Ord_0(\tilde f)\cap Q^\alpha_{\rm NW}| + |\Ext_0^+(\tilde f)\cap (Q^\alpha_{\rm NW}\cup Q^\alpha_{\rm SW})|\nonumber\\
& +|\Rel_1(\tilde f)\cap Q^\alpha_{\rm SE}| + |\Ext_0^+(\tilde f)\cap (Q^\alpha_{\rm NW}\cup Q^\alpha_{\rm NE})|. 
\end{align}
Combining these results, we obtain
\begin{align}
&\dim\left(H_0\left(\tilde f^{-1}(\{\alpha\})\right)\right) =|\Ext_1^-(\tilde f)| - |\Ext_1^-(\tilde f)\cap Q^\alpha_{\rm SW}| \nonumber \\
&- |\Ext_1^-(\tilde f)\cap Q^\alpha_{\rm NE}| + |\Ord_0(\tilde f)\cap Q^\alpha_{\rm NW}| + |\Rel_1(\tilde f)\cap Q^\alpha_{\rm SE}|  \nonumber \\
&+ |\Ext_0^+(\tilde f)\cap (Q^\alpha_{\rm NW}\cup Q^\alpha_{\rm SW})| + |\Ext_0^+(\tilde f)\cap (Q^\alpha_{\rm NW}\cup Q^\alpha_{\rm NE})|  \nonumber \\
&- |\Ext_0^+(\tilde f)|   \nonumber\\
& \ \ \ \ \ \ \ \ \ \  \ \ \ \ \ \ \ \ \ \ \ \ \ \ \ \ \ =  |\Ext_1^-(\tilde f)\cap Q^\alpha_{\rm SE}|\nonumber\\
& + |\Ord_0(\tilde f)\cap Q^\alpha_{\rm NW}| + |\Rel_1(\tilde f)\cap Q^\alpha_{\rm SE}| + |\Ext_0^+(\tilde f)\cap Q^\alpha_{\rm NW}|, \nonumber
\end{align}
which gives~(\ref{eq:tbp}) and thus proves Equation~(\ref{eq:level}). 

The theorem is proved using the three steps of the reconstruction scheme detailed before the statement \ref{th:complete}.

According to the one-to-one correspondence between the connected components of $\Reeb_f(X)$ and the points of $\Ext_0^+(\tilde f)$,
Step~1 ensures that there are as many super-nodes as there are connected components in $\Reeb_f(X)$.

Equation~(\ref{eq:level}) can be extended to intervals at no cost to prove that the number of vertices created in Step~2 and the number of
nodes in $\Mapper(X,f,\I)$ (apart from the super-nodes) is the same.

Finally, each node $v$ of $\Mapper(X,f,\I)$ corresponds to some connected component
of the preimage $f^{-1}(I)$ of some interval $I\in\I$. That connected component lies
entirely in some connected component~$X_i$ of $X$, therefore $v$ gets connected to the
super-node corresponding to~$X_i$ in $\Mapper(X,f,\I)$.
This is the only type of connection that matters for $\Mapper(X,f,\I)$, 
since every pair of intervals other than $\R$ in $\I$ has an empty intersection.
Since the connected component corresponding to $v$ belongs to at least one feature of $\Reeb_f(X)$, 
or equivalently one persistence pair of $\Dg(\tilde f)$, 
this proves that the links prescribed by $\phi$ in Step~3 and the ones of $\Mapper(X,f,\I)$ are the same.
\end{proof}
 
This result states that whenever two descriptors are the same, their corresponding TTMap graphs must also be the same and therefore for what follows the results are shown in terms of diagrams.

\subsubsection{Stability theorems}

Note that $\{\Delta_I\}_{I\in\I}$ induces the grid
$\left(\End(\I\setminus\R)\times\R\right) \cup
\left(\R\times\End(\I\setminus\R)\right)$, (Figure~\ref{fig:sign} e). Intuitively, the distances of the
points of $\Dg(\tilde{f})$ to this grid give the amount of
perturbation allowed to preserve the structure of $\Mapper(X,f,\I)$. 
Reciprocally, for a given amount of perturbation~$\e$, drawing a square of radius~$\e$ around each diagram point allows us 
to see which diagram points may change grid cells and how the structure of~$\Mapper(X,f,\I)$ is impacted. 

\begin{defin}
Let $f,g$ be two Morse-type functions defined on topological spaces $X,Y$.
The \emph{descriptor distance} between $\DM(X,f,\I)$ and $\DM(Y,g,\I)$ is:
\[
d(\DM(X,f,\I),\DM(Y,g,\I))=\inf_{\Gamma}\ \emph{cost}(\Gamma),
\]
where $\Gamma$ ranges over all partial matchings between $\Dg(\tilde{f})$ and $\Dg(\tilde{g})$ such that
$(p,p') \in \Gamma\Rightarrow(\phi(p),\phi(p'))\in\Gamma$. 
\end{defin}

%

	
\begin{thm}\label{thm:perturb_bis}{\bf Stability with respect to changes of the filter function. }
For any Morse-type functions $f,g:X\to\R$:
\[ 
d(\DM(X,f,\I),\;\DM(X,g,\I))\leq \|f-g\|_\infty. 
\]
\end{thm}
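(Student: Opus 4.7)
The plan is to reduce the claim to the classical stability theorem for extended persistence of scalar functions on a fixed space, and then upgrade the resulting matching so that it also respects the super-node map $\phi$. The key observation is that, since the underlying space $X$ is the same on both sides, the connected components of $X$ furnish a canonical identification of the super-nodes on the $f$-side and on the $g$-side; this is what allows the $\phi$-compatibility constraint to be imposed at no extra cost.

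First I would decompose the diagrams by connected components of $X$. Since the Reeb equivalence $\sim_f$ never identifies points of distinct components, the extended persistence diagram splits as
$$\Dg(\tilde{f}) \;=\; \bigsqcup_i \Dg(\tilde{f}|_{X_i}), \qquad \Dg(\tilde{g}) \;=\; \bigsqcup_i \Dg(\tilde{g}|_{X_i}),$$
where $\{X_i\}$ are the connected components of $X$. In each summand there is exactly one point in $\Ext_0^+$ (the trunk of $X_i$), and by the very definition of $\phi$ every persistence pair in $\Dg(\tilde{f}|_{X_i})$ is sent to this unique trunk. The connected components of $X$ therefore provide a canonical bijection between $\Ext_0^+(\tilde{f})$ and $\Ext_0^+(\tilde{g})$.

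Next, for each index $i$, I invoke the classical extended-persistence stability theorem applied to $(f|_{X_i}, g|_{X_i})$ on $X_i$ to produce a partial matching $\Gamma_i$ between $\Dg(\tilde{f}|_{X_i})$ and $\Dg(\tilde{g}|_{X_i})$ of cost at most $\|f|_{X_i} - g|_{X_i}\|_\infty \leq \|f-g\|_\infty$. Because each summand has only one $\Ext_0^+$ point, we may insist that $\Gamma_i$ pair the two trunks to each other: the $\ell_\infty$-distance between them is bounded by $\|f-g\|_\infty$, so matching them is never more expensive than sending them both to the diagonal. Setting $\Gamma := \bigsqcup_i \Gamma_i$ produces a partial matching between the full diagrams.

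It remains to verify the $\phi$-compatibility condition and the cost bound. For any $(p,p') \in \Gamma$, both points lie in the $i$-th summand for a common $i$, so $\phi(p)$ and $\phi(p')$ are the two trunks of $X_i$, which were already paired inside $\Gamma_i \subseteq \Gamma$; thus $(\phi(p),\phi(p')) \in \Gamma$, as required. The cost of $\Gamma$ is the supremum of the costs of the $\Gamma_i$, hence at most $\|f-g\|_\infty$, and taking the infimum over admissible matchings in the definition of the descriptor distance yields the stated inequality. The only substantive input is the classical extended-persistence stability theorem on each connected component; the $\phi$-compatibility then follows essentially for free from the component-wise construction, and the main subtlety to double-check is that the restriction of a Morse-type function to each connected component remains Morse-type so that the classical result applies verbatim.
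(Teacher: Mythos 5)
Your proposal is correct and follows essentially the same route as the paper: decompose $X$ into connected components, apply the classical extended-persistence stability theorem to each restriction $f|_{X_i}, g|_{X_i}$, and take the union of the component-wise matchings, with the $\phi$-compatibility coming for free because each component contributes a single $\Ext_0^+$ point. Your explicit verification that the two trunks of each component get matched to one another is the same observation the paper compresses into the step ``$=d^\infty_{\rm b}(\Dg(\tilde f_i),\Dg(\tilde g_i))$ since $X_i$ is connected.''
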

\begin{proof}

Decompose $X$ into its various connected components: $X=X_1\sqcup X_2 \sqcup ... \sqcup X_n$,
and let $f_i:=f|_{X_i}:X_i\rightarrow\R$ and $g_i:=g|_{X_i}:X_i\rightarrow\R$.
Note that $\Dg(f)=\Dg(f_i)\sqcup...\sqcup\Dg(f_n)$, and similarly for $g$
and the induced maps $\tilde f$ and $\tilde g$.
Thus, one can build a matching $\Gamma$ that preserves connected components by taking any matching for each pair of subdiagrams $\Dg(f_i),\Dg(g_i)$.
For instance, let us take for each pair $\Dg(f_i),\Dg(g_i)$ the matching achieving $d(\DM(X_i,f_i,\I),\;\DM(X_i,g_i,\I))$.
Call it $\Gamma_i$, and let $\Gamma=\bigcup_i \Gamma_i$.
Hence, the following inequalities hold: 

\begin{align}
d & (\DM(X,f,\I),\;\DM(X,g,\I)) \leq {\rm cost}(\Gamma) \nonumber \\
& \leq \underset{i\in\{1,...,n\}}{\max} {\rm cost}(\Gamma_i) \nonumber\\
& = \underset{i\in\{1,...,n\}}{\max} d(\DM(X_i,f_i,\I),\;\DM(X_i,g_i,\I)) \nonumber\\
&= \underset{i\in\{1,...,n\}}{\max} d^\infty_{\rm b}(\Dg(\tilde f_i),\Dg(\tilde g_i)){\rm\ since\ }X_i{\rm\ is\ connected}  \nonumber\\
&\leq \underset{i\in\{1,...,n\}}{\max} \|\tilde f_i-\tilde g_i\|_\infty \textrm{\ by\ the\ stability\ theorem~\cite{Cohen07}} \nonumber\\
&= \|\tilde f-\tilde g\|_\infty \nonumber \\
&= \|f-g\|_\infty \textrm{\ since\ the\ quotient\ maps\ }\tilde f \textrm{\ and\ }\tilde g \nonumber\\
& \textrm{\ \ \ \ \ \ \ \ \ \ \ \ \ \ \ \ preserve\ function\ values}. \nonumber
\end{align}
\end{proof}

\begin{thm}\label{thm:DStab_bis}{\bf Stability with respect to perturbations of the domain. }
Let $X$ and $Y$ be two compact Riemannian manifolds
or length spaces with curvature bounded above.  Denote by $\rho(X)$
and $\rho(Y)$ their respective convexity radii.  Let
$f:X\rightarrow\R$ and $g:Y\to\R$ be Lipschitz-continuous Morse-type
functions, with Lipschitz constants $c_f$ and $c_g$ respectively.
Assume $\distgh(X,Y)\leq\frac{1}{20}\,\min\left\{\rho(X),\rho(Y)\right\}$. 
Then, for any correspondence $C\in\mathcal{C}(X,Y)$ such that
$\epsilon_{\mathfrak{m}}(C)<\frac{1}{10}\,\min(\rho(X),\rho(Y))$,
\begin{align*}
d & (\DM(X,f,\I),\; \DM(Y,g,\I)) \nonumber \\ 
& \leq (9(c_f+c_g)+\min\{c_f,c_g\})\epsilon_{\mathfrak{m}}(C)+\epsilon_{\mathfrak{f}}(C), \nonumber 
\end{align*}
where $\epsilon_{\mathfrak{m}}(C)$ and $\epsilon_{\mathfrak{f}}(C)$  are the distance distortion and the functional distortion \cite{Burago01}.
\end{thm}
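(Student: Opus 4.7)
The plan is to derive the bound from the known functional-distortion stability theorem for extended persistence diagrams of Reeb graphs (in the spirit of Bauer--Ge--Wang), and then to lift the bottleneck matching it produces to a partial matching that respects the connected-component constraint built into the descriptor distance.

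First, I would show that the correspondence $C$, together with the hypothesis $\epsilon_{\mathfrak{m}}(C)<\tfrac{1}{10}\min\{\rho(X),\rho(Y)\}$, induces a well-defined bijection $\sigma$ between the sets of connected components of $X$ and $Y$. Concretely, decompose $X=X_1\sqcup\cdots\sqcup X_n$ and $Y=Y_1\sqcup\cdots\sqcup Y_m$, and claim that for each $i$ the set $\{y\in Y:(x,y)\in C\text{ for some }x\in X_i\}$ lies in a single connected component of $Y$, which I call $Y_{\sigma(i)}$. The argument: given two points $y,y'$ of this set, pull them back through $C$ to $x,x'\in X_i$, connect $x$ and $x'$ by a path in $X_i$ and sample it finely, then transport the sampled chain through $C$ to a sequence of points in $Y$ whose consecutive geodesic distances are controlled by $\epsilon_{\mathfrak{m}}(C)$. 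Because $\epsilon_{\mathfrak{m}}(C)$ is strictly smaller than the convexity radii, consecutive sample points in $Y$ can be joined by minimizing geodesics that stay in $Y$, so the whole chain lives in a single connected component, forcing $y$ and $y'$ to do likewise. A symmetric argument shows $\sigma$ is bijective.

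Next, for each matched pair $(X_i,Y_{\sigma(i)})$, restrict $C$ to $C_i:=C\cap(X_i\times Y_{\sigma(i)})$. One checks that $C_i\in\mathcal{C}(X_i,Y_{\sigma(i)})$ with $\epsilon_{\mathfrak{m}}(C_i)\le \epsilon_{\mathfrak{m}}(C)$ and $\epsilon_{\mathfrak{f}}(C_i)\le \epsilon_{\mathfrak{f}}(C)$. Since $X_i$ and $Y_{\sigma(i)}$ are connected, the Bauer--Ge--Wang functional-distortion stability result (in its extended-persistence formulation) yields a partial matching $\Gamma_i$ between $\Dg(\tilde{f}|_{X_i})$ and $\Dg(\tilde{g}|_{Y_{\sigma(i)}})$ whose cost is at most $(9(c_f+c_g)+\min\{c_f,c_g\})\,\epsilon_{\mathfrak{m}}(C)+\epsilon_{\mathfrak{f}}(C)$. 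Assemble $\Gamma:=\bigsqcup_i\Gamma_i$. By construction, every matched pair $(p,p')\in\Gamma$ has $p\in\Dg(\tilde{f}|_{X_i})$ and $p'\in\Dg(\tilde{g}|_{Y_{\sigma(i)}})$, so the induced super-node pair $(\phi(p),\phi(p'))$ is itself in $\Gamma$; thus $\Gamma$ is admissible for the infimum defining $d(\DM(X,f,\I),\DM(Y,g,\I))$. Its cost is $\max_i\textrm{cost}(\Gamma_i)$, which gives the announced bound.

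The main obstacle I expect is Step 1: rigorously promoting the pointwise control supplied by the metric distortion to a global statement that $C$ matches connected components bijectively. The metric-distortion bound is intrinsically pairwise, while connectedness is a global property, so one must chain the distortion estimate along a path and invoke the convexity-radius assumption to conclude that no such chain can escape a connected component. Once $\sigma$ is in place, Steps 2 and 3 reduce to invoking the existing Reeb-graph stability machinery component by component and reassembling the pieces into a single admissible matching.
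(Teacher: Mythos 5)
Your proposal is correct and follows essentially the same route as the paper: decompose $X$ and $Y$ into connected components, match them via the correspondence, apply a known per-component stability theorem for the diagrams (the paper invokes Theorem~3.4 of \cite{Carriere15b} rather than a Bauer--Ge--Wang-type result, but with exactly the bound you state), and reassemble the component-wise matchings into a single admissible partial matching as in the proof of Theorem~\ref{thm:perturb_bis}. Your Step~1 is in fact more careful than the paper, which simply asserts that if no one-to-one matching of components exists then $\distgh(X,Y)$ is infinite, whereas you derive the component bijection from the distortion and convexity-radius hypotheses by a chaining argument.
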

\begin{proof}
If there is a one-to-one matching between the connected components of $X$ and $Y$ induced by the correspondence achieving $\distgh(X,Y)$, then
the proof follows the same line as the proof of Theorem~\ref{thm:perturb_bis}.
The only difference in the proof is the use of Theorem~3.4 in~\cite{Carriere15b} instead of the stability theorem~\cite{Cohen07}.
If such a one-to-one matching does not exist, $\distgh(X,Y)$ is infinite and so is $\epsilon_{\mathfrak{m}}(C)$, hence \begin{align*}
d & (\DM(X,f,\I),\; \DM(Y,g,\I)) \nonumber \\ 
& \leq (9(c_f+c_g)+\min\{c_f,c_g\})\epsilon_{\mathfrak{m}}(C)+\epsilon_{\mathfrak{f}}(C), \nonumber 
\end{align*} still holds. 

\end{proof}

\begin{thm}\label{th:sig-approx}{\bf Stability with respect to point cloud approximations. }
Let $X$ be a submanifold of~$\R^d$ with positive
reach $r(X)$ and convexity radius $\rho(X)$.  Let $f:X\rightarrow\R$
be a Lipschitz-continuous Morse-type function, with Lipschitz constant
$c$. 
Let $P\subseteq X$ be such that every point of~$X$ lies within
distance $\epsilon$ of~$P$, for some $\e<\min\{r(X)/16,\,
\rho(X)/16,\,s/8c\}$, where $s>0$ is the minimum
distance of the points of~$\Ext_1(f)$ to the diagonal $\Delta$.
Let $\delta\in\left[4\epsilon,\,\min\{r(X)/4,\,\rho(X)/4,\,s/2c\}\right)$, and $G_\delta(P)$
be the $\delta$-neighborhood graph built on top of $P$ with parameter $\delta$.
Then, the following inequality holds:
\[ 
d\left(\DM(X,f,\I), \, \DM(G_\delta(P),{\hat f}, \I))\right)\leq 2c\delta,
\]
where $\hat{f}$ is the piecewise linear interpolation of $f$ along the edges of $G_\delta(P)$ \cite{piecewiselinearinterpolation}.
\end{thm}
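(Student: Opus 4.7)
The plan is to reduce the claim to Theorem~\ref{thm:DStab_bis} applied with $Y=G_\delta(P)$ endowed with its intrinsic path metric (each edge carrying its Euclidean length), $g=\hat f$, and a suitable correspondence $C$. First, I would verify the hypotheses of Theorem~\ref{thm:DStab_bis}: the function $\hat f$ is Morse-type on the graph (its critical values are attained only at vertices), and along any edge $[p,q]$ one has $|\hat f(p)-\hat f(q)|=|f(p)-f(q)|\le c\,\|p-q\|$, so $\hat f$ is $c$-Lipschitz. Under $\e<r(X)/16$ together with $\delta\ge 4\e$, a standard consequence of the reach hypothesis (in the spirit of Niyogi--Smale--Weinberger) is that the offset $\bigcup_{p\in P}B(p,\delta/2)$ deformation-retracts onto $X$; in particular $X$ and $G_\delta(P)$ have matching connected components, so the one-to-one matching required for the non-trivial case of Theorem~\ref{thm:DStab_bis} exists and the quotient map on $\phi$ is unambiguous.

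Next, I would build an explicit correspondence $C$ by pairing each $x\in X$ with a nearest sample $p\in P$ (at Euclidean distance at most $\e$), each $p\in P\subseteq X$ with itself as a vertex of $G_\delta(P)$, and each point on the interior of an edge with any sufficiently close point of $X$. The functional distortion is bounded by $\epsilon_{\mathfrak f}(C)\le c\,\e+c\,\delta=O(c\delta)$, combining the Lipschitz bound of $f$ at matched pairs with the oscillation of $\hat f$ along an edge of length at most $\delta$. The metric distortion $\epsilon_{\mathfrak m}(C)$ is controlled using the classical fact that, under the given reach and sampling conditions, the path metric of $G_\delta(P)$ approximates the geodesic metric of $X$ up to an additive error of $O(\e)\le O(\delta)$: neighbouring samples in $G_\delta(P)$ have Euclidean and hence geodesic distance within $\delta$, while geodesic paths in $X$ can be shadowed by edges of $G_\delta(P)$ with slack $O(\e)$.

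The hard part is obtaining the sharp constant $2c\delta$ rather than the loose bound of the form $(18c+c)\,\epsilon_{\mathfrak m}(C)+\epsilon_{\mathfrak f}(C)$ produced by a naive invocation of Theorem~\ref{thm:DStab_bis}. To reach $2c\delta$, I would instead construct the matching $\Gamma$ between $\Dg(\tilde f)$ and $\Dg(\widetilde{\hat f})$ directly, via an interleaving of the sublevel- and superlevel-set persistence modules of $f$ on $X$ and of $\hat f$ on $G_\delta(P)$. Under the correspondence $C$ and the oscillation bound $|\hat f(x')-f(x)|\le c\delta$ whenever $(x,x')\in C$, these filtrations are $c\delta$-interleaved, and stability of extended persistence yields $d_{\mathrm b}^\infty(\Dg(\tilde f),\Dg(\widetilde{\hat f}))\le 2c\delta$. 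The condition $\e<s/(8c)$, which forces $c\delta<s/2$, is precisely what ensures that no point of $\Ext_1(\tilde f)$ can be matched across the diagonal, so that $\Gamma$ respects the component map $\phi$ and hence realises the descriptor distance, giving $d(\DM(X,f,\I),\DM(G_\delta(P),\hat f,\I))\le 2c\delta$ as required.
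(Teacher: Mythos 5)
Your overall architecture is the right one and matches the paper's: first match the connected components of $X$ with those of $G_\delta(P)$, then bound the diagram distance component by component and assemble the matchings as in the proof of Theorem~\ref{thm:perturb_bis}. Your component-matching step is sound (the paper gets it more cheaply: any two points of $X$ in different components are at Euclidean distance at least $2r(X)$ because their midpoint lies on the medial axis, so no edge of $G_\delta(P)$ with $\delta<r(X)/4$ can cross components; your Niyogi--Smale--Weinberger offset argument reaches the same conclusion). You are also right that a naive invocation of Theorem~\ref{thm:DStab_bis} gives the wrong constant, and correct in spirit about the role of $s$: the condition $c\delta<s/2$ is what prevents points of $\Ext_1(\tilde f)$ from being sent to the diagonal.

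The genuine gap is in your central step: you pass from ``the correspondence $C$ satisfies $|\hat f(x')-f(x)|\le c\delta$ for $(x,x')\in C$'' directly to ``the filtrations are $c\delta$-interleaved.'' A correspondence is not a continuous map, so it induces no homomorphisms between the persistence modules of $(X,f)$ and $(G_\delta(P),\hat f)$; constructing the interleaving morphisms is precisely where the reach and convexity-radius hypotheses must be used (e.g.\ via the nearest-point projection onto $X$, defined only within the reach, together with a nerve-type argument identifying $H_0$ and $H_1$ of preimages of intervals under $\hat f$ with those of the corresponding offsets), and this is the entire content of the result the paper cites for this step, namely Theorem~7.5 of~\cite{Carriere15b}, used in place of the stability theorem~\cite{Cohen07} in the template of Theorem~\ref{thm:perturb_bis}. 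Your sketch treats this as immediate, and the unexplained factor discrepancy (a $c\delta$-interleaving would give $d^\infty_{\rm b}\le c\delta$, yet you claim $2c\delta$) is a symptom that the actual mechanism producing the bound has not been identified. Either cite the per-component approximation theorem as the paper does, or supply the projection-and-nerve construction of the interleaving; as written, the key inequality is asserted rather than proved.
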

\begin{proof}
Let $C_X=\min \{\|x-x'\|_d : x,x'$ do not belong to the same connected component of $X\}$, 
and let $x,x'\in X$ be two points achieving $C_X$.
Let $y=\frac{1}{2}(x+x')\in\R^d$. Then $\|x-y\|_d \geq r(X)$
since $y$ belongs to the medial axis of $X$. Hence, $C_X = 2\|x-y\|_d \geq 2r(X)$.
Since $\delta < \frac{1}{4}r(X)<C_X$, it follows that $X$ and $\Rips^1_\delta(P)$ 
have the same number of connected components.
Then, the proof of Theorem~\ref{th:sig-approx} follows the same line
as the proof of Theorem~\ref{thm:perturb_bis}.
The only difference in the proof is the use of Theorem~7.5 in~\cite{Carriere15b} instead of the stability theorem~\cite{Cohen07}.
\end{proof}

\subsubsection{Hypothesis verification}

\label{verification}
In order to use those theorems, one needs to verify their hypothesis. Hence, the topology induced by the distance $d^*$ should verify that it is equivalent to the euclidean distance to be able to use the last theorem. Moreover, the function need to be Lipschitz in order to use the theorems \ref{thm:DStab_bis} and \ref{th:sig-approx}. Lastly, $f$ needs to be of Morse-type in order to use all of the theorems of stability (\ref{completeness},  \ref{thm:perturb_bis},\ref{thm:DStab_bis},\ref{th:sig-approx}).\\

For that, we will proceed in several steps : 
Let $x,y$ be two deviation components, whence $x,y \in \mathbb{R}^n.$ Then,
$d^*(x,y) = d_M(x,y) +  \bar{d_E} (x,y),$ where $\bar{d_E} (x,y)$ is the bounded euclidean distance by $1/4$ and $d_M(x,y)$ is given by $d_M(x,y) = \sum_{i=1}^n d_{m_i} (x_i,y_i),$ where 
\begin{equation}
\begin{small}
d_{m_i} (x_i,y_i) = \begin{cases} 0 &\mbox{if }  \textrm{sign}(x_i)=\textrm{sign}(y_i),\\ 
1 & \mbox{if }  \textrm{sign}(x_i) \neq\textrm{sign}(y_i) \\  
&\textrm{ and } |x_i| \textrm{ or }|y_i| \geq \alpha  \\
\frac{|x_i-y_i|}{8\alpha n} & \mbox{otherwise } \end{cases} 
\end{small}
\end{equation}
We observe that even if all the values are noise smaller than $\alpha$ (around 0), then the $d(x,y) < 1/2$, and therefore not perturbing the results if we replace $\epsilon$ by $\epsilon + 1/2$ in the corresponding section.
We will prove that with this distance
\begin{enumerate}
\item defines a topology.
\item verifies that $(\mathbb{R}^n, \mathcal{T}_{d^*}) = (\mathbb{R}^n,\mathcal{T}_{\bar{E}}),$ which is the topology with the bounded euclidean distance and which is known (Munkres) to be the same as $(\mathbb{R}^n,\mathcal{T}_{{E}})$, the standard topology with the euclidean distance.
\item  the
$$\begin{array}{ccc}
f:(\mathbb{R}^n, \mathcal{T}_{d^*})& \rightarrow & (\mathbb{R},\mathcal{T}_E) \\
x=(x_1,\dots, x_n) & \mapsto & \sum_{i=1}^n |x_i| 
\end{array}$$ is Lipschitz.
\item the function $f$ is Morse-type.

\end{enumerate} 
\begin{enumerate}

\item Let us show that $\{B_{d^*}(x,\epsilon) \mid \epsilon >0, x \in \mathbb{R}^n  \}$ defines a base of a topology. 
Indeed, for every $x \in \mathbb{R}^n$ and $x \in B_{d^*}(x,1/2)$ so the first axiom is verified. Secondly, let $x,y \in \mathbb{R}^n$ be two vectors and $\delta$ and $\epsilon$ two real numbers then, let $$t \in B_{d^*}(x,\delta) \cap B_{d^*}(y,\epsilon).$$ Let 
$$\begin{array}{cc}
\nu = & \min \Bigg\{ \min \{ |t_i | \mid t_i \neq 0, i = 1, \dots, n\}, \\
 & \min \{ |\alpha - |t_i| | \mid t_i \neq \alpha, i = 1, \dots, n\}, \\
& 1/4, \delta- d^*(x,t), \epsilon -d^*(y,t) \Bigg\}>0.
\end{array}
$$

We want to show that $B_{d^*}(t, \nu) \subseteq B_{d^*}(x,\delta) \cap B_{d^*}(y,\epsilon)$. The proof is the same for $y$ and $x$ just replacing $\epsilon$ by $\delta.$ Let us therefore focus on showing that $B_{d^*}(t, \nu) \subseteq B_{d^*}(x,\delta)$. \\

Let $z \in B_{d^*}(t, \nu)$. Hence, $d^*(z,t) < \nu$ and therefore
$$d_M(z,t) = \sum_{i=1}^n d_{m_i} (z_i,t_i) < \nu$$ since $d_{m_i} (z_i,t_i) \geq 0$ this means that $d_{m_i}(z_i,t_i) < \nu$ for every $i= 1, \dots, n$.\\

\newtheorem*{lem*}{Lemma}
\begin{lem*} [A]
If $t_i \neq 0$, then $z_i \neq 0$ and $\textrm{sign}(z_i)=\textrm{sign}(t_i)$.
\end{lem*}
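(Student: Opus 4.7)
The plan is to extract sign preservation from the bounded Euclidean part of $d^*$ alone, using the two constraints $\nu \leq |t_i|$ (active because $t_i\neq 0$) and $\nu \leq 1/4$ that were built into the definition of $\nu$. The mismatch component $d_M$ will not play any role in this particular lemma; it is used in the companion lemmas that handle coordinates crossing the threshold $\alpha$.

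First I would separate the two summands: since $d^*(z,t) = d_M(z,t) + \bar{d_E}(z,t)$ with both terms nonnegative and $d^*(z,t) < \nu$, one immediately gets $\bar{d_E}(z,t) < \nu \leq 1/4$. Interpreting $\bar{d_E}$ as the Euclidean distance truncated at $1/4$, being strictly below the cap forces $\bar{d_E}(z,t) = d_E(z,t)$, so $d_E(z,t) < \nu$. Projecting to the $i$-th coordinate gives $|z_i - t_i| \leq d_E(z,t) < \nu$.

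Next I would combine this with the bound $\nu \leq |t_i|$ coming from the definition of $\nu$, obtaining $|z_i-t_i| < |t_i|$. The reverse triangle inequality then gives $|z_i| \geq |t_i| - |z_i - t_i| > 0$, so $z_i \neq 0$. If $z_i$ and $t_i$ had opposite signs, one would have $|z_i - t_i| = |z_i| + |t_i| \geq |t_i|$, contradicting the strict inequality just derived; hence $\mathrm{sign}(z_i) = \mathrm{sign}(t_i)$.

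I do not foresee any real obstacle. The only delicate point is the step identifying $\bar{d_E}$ with $d_E$ on the pair $(z,t)$: this relies on reading ``bounded euclidean distance by $1/4$'' as $\min(d_E,1/4)$, which is the standard convention and is what makes the cap $\nu \leq 1/4$ in the definition of $\nu$ meaningful in the first place.
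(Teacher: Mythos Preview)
Your proposal is correct and follows essentially the same route as the paper: isolate the bounded Euclidean part of $d^*$, use $\nu\leq 1/4$ to remove the cap, project to the $i$-th coordinate to get $|z_i-t_i|<\nu\leq |t_i|$, and conclude nonvanishing and sign preservation by the (reverse) triangle inequality. The paper argues the last step by an explicit case split on the signs of $t_i$ and $z_i$ rather than invoking the reverse triangle inequality, but the content is identical.
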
 
\begin{proof}
  Since, $d^*(z,t) \leq \nu$ then $d_{\bar{E}}(z,t) \leq \nu$.\\
 As $\nu < 1/4,$ $d_{\bar{E}}(z,t) = d_{E}(z,t)$  and hence 
$\sum_{i=1} |z_i -t_i| < \nu.$ 

This implies that $|z_i-t_i| < \nu$ for every $i = 1, \dots,n$.  \\
Moreover, since $|t_i| \neq 0$, we have that$ |z_i-t_i| < \nu \leq |t_i|$.
Therefore, $ z_i  \neq 0$ because otherwise we get $ |t_i| < |t_i|$, which is a contradiction. Moreover if $t_i >0$ and $z_i <0$ then $|z_i-t_i| = t_i - z_i = |t_i| + |z_i|,$ since $z_i$ is negative. This can not be strictly smaller than $|t_i|$ otherwise we get $|z_i | < 0$ which is a contradiction.  \\
Similarly if $t_i <0$ and $z_i >0,$ then $ |z_i-t_i| = z_i -t_i = |z_i| + |t_i|,$ which can not be strictly smaller than $|t_i|$.
Therefore, $ z_i $ and $t_i$ must have the same signature. \\
\end{proof}
\begin{lem*}[B]
If  $|t_i| \neq \alpha$ either $|t_i|$ and $|z_i|$ are $>\alpha$ or $|t_i|$ and $|z_i| < \alpha.$
\end{lem*}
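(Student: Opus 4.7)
The plan is to exploit the specific term $\min\{|\alpha - |t_i|| \mid t_i \neq \alpha,\, i=1,\dots,n\}$ in the definition of $\nu$, together with the bounded Euclidean part of $d^*$. The key observation is that since $\nu < 1/4$, the bound $d_{\bar E}(z,t) \leq d^*(z,t) < \nu$ upgrades to $d_E(z,t) < \nu$ (the Euclidean distance is unbounded here but agrees with $d_{\bar E}$ on this scale), which componentwise gives $|z_i - t_i| < \nu$ for every $i$. This is the same mechanism already used in Lemma A, and I would state it in one sentence and cite Lemma A's derivation.

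Next, for any index $i$ with $|t_i| \neq \alpha$, apply the reverse triangle inequality to obtain
\[
\bigl| |z_i| - |t_i| \bigr| \;\leq\; |z_i - t_i| \;<\; \nu \;\leq\; \bigl| \alpha - |t_i| \bigr|,
\]
where the last inequality uses the defining minimum of $\nu$.

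The conclusion then follows by a two-case split on the sign of $|t_i|-\alpha$. If $|t_i| > \alpha$, then $|\alpha - |t_i|| = |t_i|-\alpha$, so $|z_i| > |t_i| - (|t_i|-\alpha) = \alpha$, putting $|z_i|$ above $\alpha$ as well. Symmetrically, if $|t_i| < \alpha$, then $|\alpha-|t_i|| = \alpha - |t_i|$, so $|z_i| < |t_i| + (\alpha - |t_i|) = \alpha$. Either way $|z_i|$ lies on the same side of $\alpha$ as $|t_i|$, which is exactly the claim.

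I do not anticipate any real obstacle here: the entire content of the lemma is encoded in how $\nu$ was chosen (precisely so that coordinates cannot cross the threshold $\alpha$), and the only subtle point is making sure that the bounded Euclidean distance $d_{\bar E}$ can be replaced by $d_E$ at scale below $1/4$, which is already handled in the preceding lemma and needs only be invoked rather than reproved.
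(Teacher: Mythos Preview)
Your proposal is correct and follows essentially the same approach as the paper: both invoke the coordinatewise bound $|z_i-t_i|<\nu$ (derived exactly as in Lemma~A from $d_{\bar E}(z,t)<\nu<1/4$), combine it with $\nu\leq |\alpha-|t_i||$, and then split into the two cases $|t_i|>\alpha$ and $|t_i|<\alpha$. The only cosmetic difference is that you state the reverse triangle inequality $\bigl||z_i|-|t_i|\bigr|\leq|z_i-t_i|$ once up front, whereas the paper applies the ordinary triangle inequality separately in each case; the arithmetic is otherwise identical.
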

\begin{proof}
 By the above argument, we know that $ |z_i-t_i| < \nu$ and $\nu \leq \mid \alpha - |t_i| \mid$. \\

Let us suppose $|t_i| > \alpha$ then $\mid \alpha - |t_i| \mid =  |t_i| - \alpha$, and 
$|z_i-t_i| < |t_i| - \alpha $ implies $-|z_i-t_i| > -|t_i| + \alpha$, which results in 
$$|z_i| \geq |t_i| -  |z_i-t_i| > |t_i| - |t_i| + \alpha = \alpha.$$ Hence, $|z_i| > \alpha$.\\

Let us suppose $|t_i| < \alpha$ then $\mid \alpha - |t_i| \mid = \alpha- |t_i|$, and
$$|z_i| \leq |z_i-t_i| + |t_i| <\alpha - |t_i| + |t_i| = \alpha$$ and hence $|z_i| < \alpha$.
\end{proof}

Let us enumerate the cases : 
\begin{itemize}
\item $H = \{ i \in \{1, \dots, n\} \mid  |t_i| = \alpha\}.$
\item $I = \{ i \in \{1, \dots, n\} \mid i \notin H, \textrm{ sign}(t_i)=0 \}.$
\item $J = \{ i \in \{1, \dots, n\} \mid  i \notin H\cup I, \textrm{ sign}(x_i)= \textrm{sign}(t_i) \}.$
\item $K = \{ i \in \{1, \dots, n\} \mid i \notin H\cup I, \textrm{ sign}(x_i) \neq \textrm{sign}(t_i), |t_i| < \alpha \}.$
\item $L = \{ i \in \{1, \dots, n\} \mid i \notin H\cup I, \textrm{ sign}(x_i) \neq \textrm{sign}(t_i), |t_i| \geq \alpha \}.$

\end{itemize}

Let us calculate, 
\begin{align*}
d_M(x,z) = &\sum_{i=1}^n d_{m_i}(x_i,z_i) = \sum_{i\in H}d_{m_i}(x_i,z_i)\\
& + \sum_{i\in I}d_{m_i}(x_i,z_i) 
  +  \sum_{i\in J} d_{m_i}(x_i,z_i) \\
& +  \sum_{i\in K } d_{m_i}(x_i,z_i)  +  \sum_{i\in L} d_{m_i}(x_i,z_i) 
\end{align*}
Now \begin{itemize}
\item For $i \in H$, there are two cases:
\begin{itemize}
\item If $d_{m_i}(x_i,t_i) = 0$, then by Lemma A, we have that $\textrm{sign}(t_i) = \textrm{sign}(z_i)$ and therefore $\textrm{sign}(z_i) = \textrm{sign}(x_i )$, and therefore $d_{m_i}(x_i,z_i) = d_{m_i}(x_i,t_i) = 0$.
\item If $d_{m_i}(x_i,t_i) = 1$, then $d_{m_i}(x_i,z_i) < d_{m_i}(x_i,t_i) = 1$.
\end{itemize}
\item For $i \in I $ since $t_i = 0$ there are several scenarios : \begin{itemize}
\item $|x_i | \geq \alpha$ : in this case either $z_i$ and $x_i$ have the same signature and then $d_{m_i} (x_i,z_i) = 0 < d_{m_i} (x_i, t_i)$ or the have opposite signatures and then $d_{m_i} (x_i,z_i) = 1 = d_{m_i} (x_i, t_i)$. In both cases $d_{m_i} (x_i,z_i)\leq  d_{m_i} (x_i, t_i)$.
\item $0<|x_i | < \alpha$ : if $\textrm{sign}(z_i)=\textrm{sign}(x_i)$, then $d_{m_i} (x_i,z_i) = 0 < d_{m_i}(x_i,t_i)$, otherwise by Lemma B as $t_i < \alpha$ we have that $z_i$ is smaller than $\alpha$ as well and hence $\textrm{sign}(z_i) \neq\textrm{sign}(x_i)$ then $d_{m_i} (x_i,z_i) =\frac{|x_i-z_i|}{8n\alpha} = \frac{|x_i|}{8n\alpha}+ \frac{|z_i|}{8n\alpha} =  d_{m_i} (x_i,t_i) +d_{m_i} (t_i,z_i)$.
\item $|x_i|= 0$ then $t_i= x_i$ and $d_{m_i}(x_i,z_i) = d_{m_i}(t_i,z_i)$
\end{itemize}
\item For $i \in J$  since $\textrm{sign}(t_i)=\textrm{sign}(x_i)$ and from Lemma A, we know that $\textrm{sign}(t_i) = \textrm{sign}(z_i).$ Therefore, $d_{m_i}(x_i,z_i) = d_{m_i} (x_i, t_i)= 0$.
\item  For $i \in K$, then $ |t_i| < \alpha$, and we know from Lemma B. that this implies $|z_i| < \alpha$ as well. We again have two cases here : 
\begin{itemize}
\item $|x_i| \geq \alpha$, $d_{m_i}(x_i,z_i) = 1 = d_{m_i}(x_i,t_i)$.\\
\item   $|x_i| < \alpha$, then $d_{m_i}(x_i,z_i) = \frac{|x_i-z_i|}{8n\alpha} \leq \frac{|x_i-t_i|}{8n\alpha} + \frac{|z_i-t_i|}{8n\alpha} = d_{m_i}(x_i,t_i)  + \frac{|t_i-z_i|}{8n\alpha} = d_{m_i}(x_i,t_i) + d_{m_i} (t_i,z_i) $.
\end{itemize}
\item For $i \in L$, since $ |t_i| \geq \alpha$, we know from Lemma B that $ |z_i| \geq \alpha$ as well, which implies that $d_{m_i} (x_i,t_i) = 1= d_{m_i} (x_i,z_i)$. 
\end{itemize}

Put together we have that 
\begin{align*}
d_M(x,z) =& \sum_{i=1}^n d_{m_i}(x_i,z_i) \\
=&  \sum_{i\in H}d_{m_i}(x_i,z_i) + \sum_{i\in I}d_{m_i}(x_i,z_i)  \\
&+  \sum_{i\in J} d_{m_i}(x_i,z_i)  +  \sum_{i\in K } d_{m_i}(x_i,z_i) \\
& +  \sum_{i\in L} d_{m_i}(x_i,z_i) \\
 \leq &
\sum_{i\in H}d_{m_i}(x_i,t_i)+ \sum_{i\in I}d_{m_i}(x_i,t_i) \\
&+  \sum_{i\in J} d_{m_i}(x_i,t_i) + d_{m_i}(t_i,z_i) \\
 & +  \sum_{i\in K } d_{m_i}(x_i,t_i) + d_{m_i}(t_i,z_i) \\
 & +  \sum_{i\in L} d_{m_i}(x_i,t_i) \\
  \leq &  d_M(x,t) + d_M(t,z) 
 \end{align*}
 
 Hence, 
 $d^*(x,z) = d_M(x,z) + d_{\bar{E}}(x,z) \leq d_M(x,t) + d_{\bar{E}}(x,t) + d_M(t,z) + d_{\bar{E}}(t,z)= d^*(x,t) + d^*(t,z) \leq d^*(x,t) + \delta-d^*(x,t) = \delta.$
\item $"\supseteq"$ Let $\epsilon >0$ and let $x \in \mathbb{R}^n$ if $\delta = \epsilon >0$ then $$B_{d^*}(x,\delta) \subseteq  B_{\bar{E}}(x,\epsilon) .$$ Indeed, if $y \in B_{d^*}(x,\delta) $, then $ d^*(x,y) < \delta$ and hence, $d_{\bar{E}}(x,y) \leq d_M(x,y) +  d_{\bar{E}}(x,y),$ since $d_M(x,y) \geq 0$ for every $x,y$ and hence $d_{\bar{E}}(x,y)  \leq d^*(x,y) < \delta = \epsilon.$ Therefore, $y \in B_{\bar{E}}(x,\epsilon)$. \\

 $"\subseteq"$ Let $\epsilon >0$ and let $x \in \mathbb{R}^n$ if $\delta = \min(\alpha/2,1/4, \epsilon/(\frac{1}{8\alpha}+1)) >0$ then $$B_{\bar{E}}(x,\delta) \subseteq  B_{d^*}(x,\epsilon) .$$ Indeed, if $y \in B_{\bar{E}}(x,\delta) $, then $ d_{\bar{E}}(x,y) < \delta$ and since $\delta < \alpha/2,$ and $\delta <1/4$, then for every $ i \in \{1, \dots, n\}$ either $\textrm{ sign}(x_i) = \textrm{ sign}(y_i)$ or $\textrm{ sign}(x_i) \neq \textrm{ sign}(y_i)$  and
  both $|x_i|$ and $|y_i|$ are less than or equal to $\alpha$.  \\
  We use the fact that $d_{\bar{E}}(x,y) < 1/4$ implies 
  $d_{\bar{E}}(x,y) = d_E(x,y)$. Then, $d_{\bar{E}}(x,y) < \alpha/2$ implies that $\sum_{i=1}^n |x_i-y_i| < \alpha/2$ 
  and therefore $|x_i-y_i| < \alpha/2.$  \\
  If $\textrm{sign}(x_i) \neq \textrm{sign}(y_i),$ then either $x_i >0$ and $y_i <0,$ 
  implying that $|x_i-y_i| = x_i-y_i > x_i = |x_i|$ and therefore $|x_i| < \alpha/2.$ This in turn implies that $ |y_i| < |x_i-y_i| + |x_i| < \alpha/2 + \alpha/2 = \alpha,$ or $y_i > 0$ and $x_i <0$ which with the same reasoning shows that $|x_i|$ and $|y_i|$ are smaller than $\alpha.$ \\

Coming back to the original problem, we obtain either $d^*(x,y) = d_{\bar{E}} (x,y)$ when  $\textrm{ sign}(x_i) = \textrm{ sign}(y_i)$ or  $$d^*(x,y) = d_M(x,y) +d_{\bar{E}}(x,y) = \sum_{i \in I} \frac{|x_i-y_i|}{8\alpha n}+ d_{\bar{E}}(x,y),$$ and since the $L^1$ norm is bounded by $\sqrt{n}$ times the $L^2$ norm, it is clear that $$\frac{|x_i-y_i|}{8\alpha n} \leq \frac{\sqrt{n}}{8\alpha n} \cdot d_{\bar{E}}(x,y)  \leq \frac{1}{8\alpha } \cdot d_{\bar{E}}(x,y) .$$ Therefore, $$d^*(x,y)  \leq  \frac{1}{8\alpha } \cdot  \delta + \delta \leq \epsilon,$$ where $I=\{i \in \{1, \dots,n \}\mid \textrm{sign}(x_i) \neq\textrm{sign}(y_i) \textrm{ and } |x_i| \textrm{ and }|y_i| \leq \alpha  \}.$ Therefore, $y \in B_{d^*}(x,\epsilon)$. 
\item $f$ is Lipschitz since $$f: (\mathbb{R}^n, \mathcal{T}_{d^*}) = (\mathbb{R}^n,\mathcal{T}_{{E}}) \rightarrow (\mathbb{R}, \mathcal{T}_{E}) .$$ and hence 

\begin{align*}
d(f(x),f(y)) &= \mid f(x)-f(y) \mid = \mid \sum_{i=1}^n |x_i| - \sum_{i=1} |y_i| \mid \\
&\leq  \sum_{i=1}^n \mid x_i - y_i \mid \leq d_E (x,y).
\end{align*} 
\item It is clearly of Morse-Type, since now $f:(\mathbb{R}^n, \mathcal{T}_{d^*}) = (\mathbb{R}^n,\mathcal{T}_{{E}}) \rightarrow (\mathbb{R}, \mathcal{T}_{E}) $ is the $L^1$-norm. Each interval in $\R$ has as pre-image a void thickened diamond in $\R^n$, which is compact and locally connected. Since the thickening is given by the length of the interval, it is then straightforward to obtain the needed homeomorphism and conclude that it is of Morse-type. 
\end{enumerate} 

 \bibliographystyle{unsrt}

\setcounter{figure}{0}
\beginsupplement

\begin{figure*}[h]
    \centering
    \renewcommand{\figurename}{Supplementary Figure}
        \includegraphics[height=18cm]{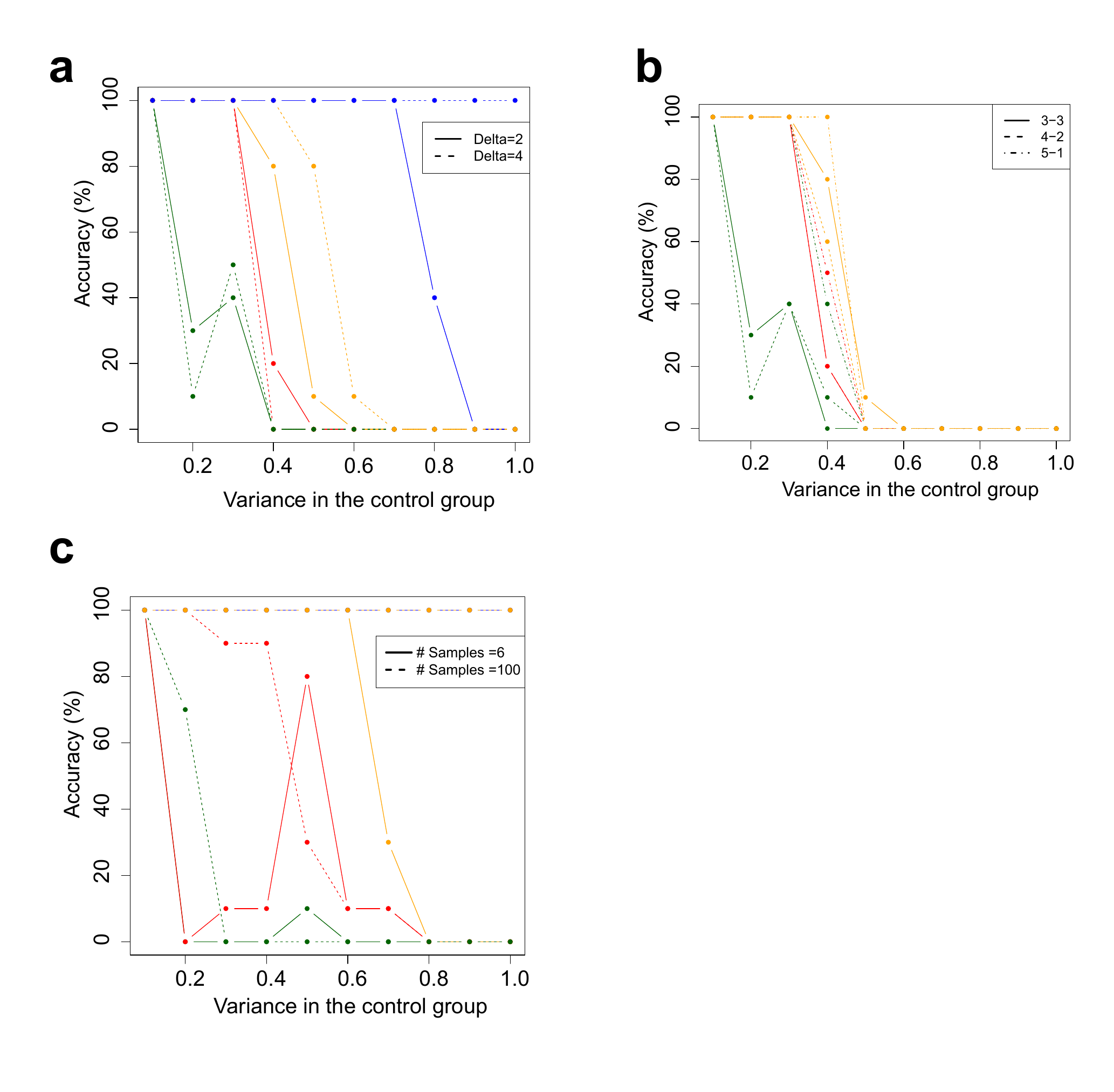}
	\caption{\label{SUP} Supplementary in silico data (a) accuracy plot when increasing delta b) accuracy when the subgroups have different sizes (c) accuracy plot when increasing the number of samples in the control group (d) accuracy plot when the sample size both in the control and in the test group are enlarged}
\end{figure*}

\newpage

\begin{table*}[h]
\begin{center}
%
%
\begin{tabular}{cccc}
\hline 
Name &
Abbr. & Name & Abbr.\\
\hline
Adult Accessory gland 	&	A	&	Adult Ovary 	&	O	\\
Adult Brain 	&	B	&	Larval Feeding Fatbody 	&	Fq	\\
Adult Carcass 	&	C&	Larval Feeding Carcass	&	Fc	\\
Adult Crop 	&	R	&	Adult Salivary Gland 	&	S	\\
Adult Heart 	&	D	&	Adult Spermatheca Mated 2	&	K3	\\
Adult Eye 	&	E	&	Larvae Wandering Tubules 	&	Wt	\\
Larval Feeding Hind Gut 	&	Fg	&	Adult Testes 	&	T\\
Adult Hind Gut	&	G	&	Adult Thoracic Muscle &	V		\\
Adult Head 	&	H	&	Adult Trachea & X	\\
Larval Feeding Mid Gut 	&	Fm	&	Adult Thoracoabdominal ganglion 	&	U	\\
Larval Feeding Salivary Gland 	&	Fs	&	Larval Feeding CNS	&	Fn	\\
Adult Spermatheca Mated 	&	K	&	Larval Wandering fat body	&	W	\\
Adult Spermatheca Virgin 	&	K2	&	Adult Wings	&	P	\\
Adult Mid Gut 	&	M	&	Whole Larvae Feeding 	&	F\\
Adult Ejaculatory Duct 	&	Z	&	Larval Feeding Trachea	&	Fx	\\
Larval Feeding Malpighian Tubule 	&	Ft	&	5th Passage Drosophila S2 Cells 	&	Y	\\
	&		&	Adult Fatbody 	&	Q	\\

\hline

\end{tabular}
\caption{Legend used for the fly data set, Abbr. = Abbreviation}\label{Tab:01} 
\end{center}
\end{table*}

\begin{figure*}[h]
    \centering
    \renewcommand{\figurename}{Supplementary Figure}
        \includegraphics[height=9cm]{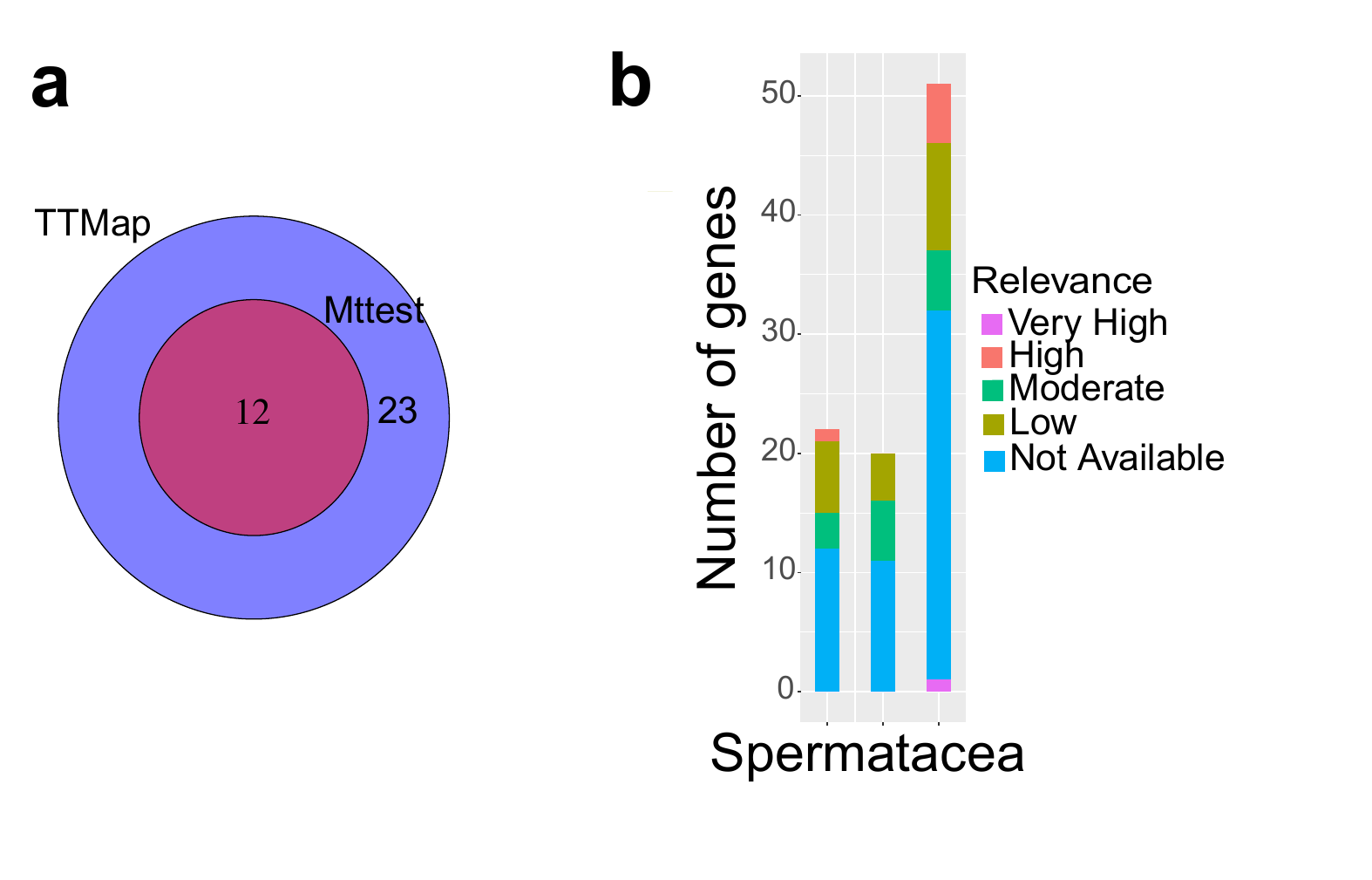}
	\caption{\label{fig_fly_sup} (a) Venn diagramm of significant genes of K with TTMap and with moderated-t-test (Mttest). (b) Barplot showing the relevance of the genes missed by Mttest on K, K2 and K3 }
\end{figure*}

\begin{figure*}[h]
    \centering
    \renewcommand{\figurename}{Supplementary Figure}
        \includegraphics[height=18cm]{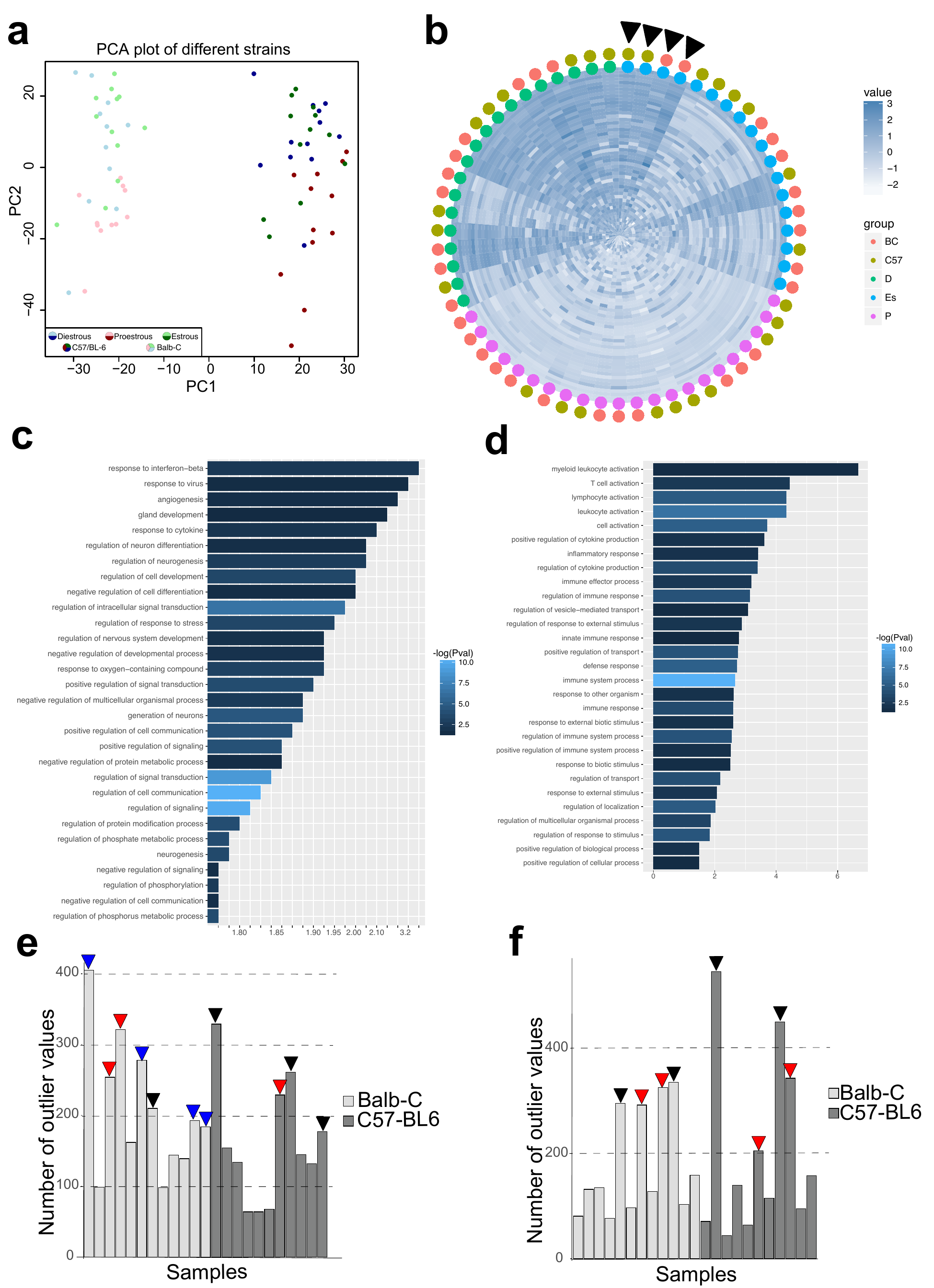}
	\caption{\label{supPCA} (a) PCA plot of RNA-seq profiles of mammary glands from Balb-C (light) and C57/BL6 mice (dark), in different phases of the estrous cycle. (b) Circle plot heatmap showing significant genes when the proestrous phase is considered as the control, without strain constraint. (c) heatmap of Panther pathway analysis by Fold Change with \textit{-log(Pval)} as a color code of E vs P  and (d) of E vs D }
\end{figure*}

\begin{figure*}[h]
    \centering
    \renewcommand{\figurename}{Supplementary Figure}
        \includegraphics[height=17cm]{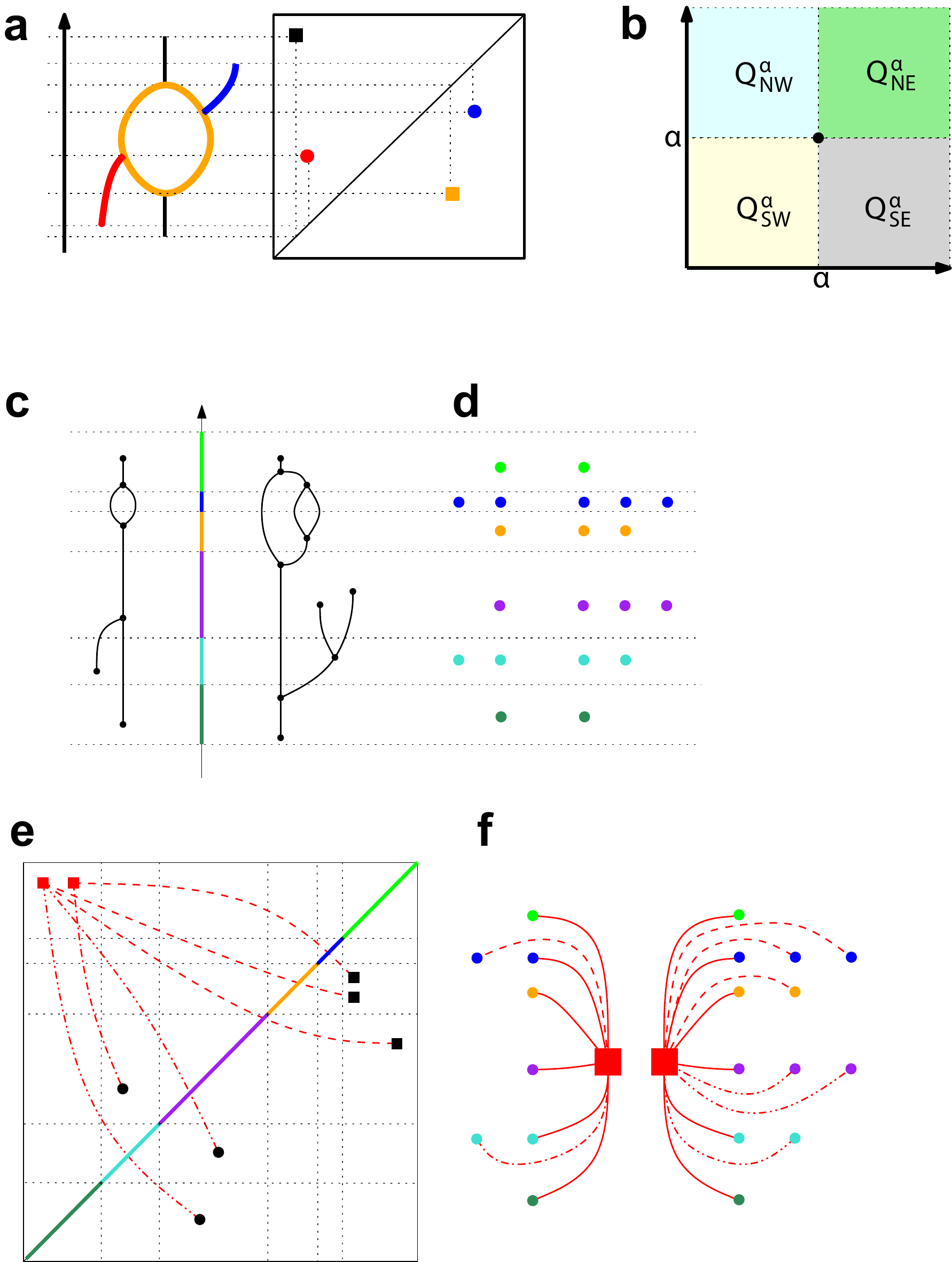}
	\caption{\label{fig:sign} (a) Example of correspondences between topological features of a graph
	and points in its corresponding extended persistence diagram. Note that ordinary persistence
	is unable to detect the blue upwards branch. (b) Plot of the various $Q_*^\alpha$ in the plane. (c) The Reeb graph and (d) its Mapper computed with a cover of $\im(f)$ with disjoint intervals.
	(e) By adding $\R$ to this cover, the descriptor is calculated and (f) the Mapper can be retrieved from the descriptor in (e).}
\end{figure*}
\end{document}